\definecolor{lightblue}{RGB}{235, 242, 255}  
\definecolor{lightyellow}{RGB}{255, 250, 220}  
\definecolor{lightgreen}{RGB}{230, 255, 230}  
\definecolor{lightpurple}{RGB}{250, 240, 255}  
\definecolor{mycolor}{RGB}{255, 245, 232}
\begin{document}

\title{GORACS: Group-level Optimal Transport-guided Coreset Selection for LLM-based Recommender Systems}

\author{Tiehua Mei}
\orcid{0009-0005-9677-4653}
\affiliation{%
    \department{School of Data Science}
    \institution{Fudan University}
    \city{Shanghai}
    \country{China}
}
\email{thmei24@m.fudan.edu.cn}

\author{Hengrui Chen}
\orcid{0009-0002-3310-9224}
\affiliation{%
    \department{School of Data Science}
    \institution{Fudan University}
    \city{Shanghai}
    \country{China}
}
\email{chenhr24@m.fudan.edu.cn}

\author{Peng Yu}
\orcid{0009-0005-0335-9152}
\affiliation{%
    \department{School of Data Science}
    \institution{Fudan University}
    \city{Shanghai}
    \country{China}
}
\email{pyu22@m.fudan.edu.cn}

\author{Jiaqing Liang}
\orcid{0000-0003-0670-5602}
\affiliation{%
    \department{School of Data Science}
    \institution{Fudan University}
    \city{Shanghai}
    \country{China}
}
\email{liangjiaqing@fudan.edu.cn}

\author{Deqing Yang}
\orcid{0000-0002-1390-3861}
\authornote{Corresponding author.}
\affiliation{%
    \department{School of Data Science}
    \institution{Fudan University}
    \city{Shanghai}
    \country{China}
}
\email{yangdeqing@fudan.edu.cn}

\renewcommand{\shortauthors}{Tiehua Mei, Hengrui Chen, Peng Yu, Jiaqing Liang, \& Deqing Yang}

\begin{abstract}
Although large language models (LLMs) have shown great potential in recommender systems, the prohibitive computational costs for fine-tuning LLMs on entire datasets hinder their successful deployment in real-world scenarios. To develop affordable and effective LLM-based recommender systems, we focus on the task of \emph{coreset selection} which identifies a small subset of fine-tuning data to optimize the test loss, thereby facilitating efficient LLMs' fine-tuning. Although there exist some intuitive solutions of subset selection, including distribution-based and importance-based approaches, they often lead to suboptimal performance due to the misalignment with downstream fine-tuning objectives or weak generalization ability caused by individual-level sample selection. To overcome these challenges, we propose \textbf{GORACS}, which is a novel \textbf{G}roup-level \textbf{O}ptimal t\textbf{RA}nsport-guided \textbf{C}oreset \textbf{S}election framework for LLM-based recommender systems. GORACS is designed based on two key principles for coreset selection: 1) selecting the subsets that \emph{minimize the test loss} to align with fine-tuning objectives, and 2) enhancing model generalization through \emph{group-level} data selection. Corresponding to these two principles, GORACS has two key components: 1) a Proxy Optimization Objective (POO) leveraging optimal transport and gradient information to bound the intractable test loss, thus reducing computational costs by avoiding repeated LLM retraining, and 2) a two-stage Initialization-Then-Refinement Algorithm (ITRA) for efficient group-level selection. Our extensive experiments across diverse recommendation datasets and tasks validate that GORACS significantly reduces fine-tuning costs of LLMs while achieving superior performance over the state-of-the-art baselines and full data training. The source code of GORACS are available at \url{https://github.com/Mithas-114/GORACS}.
\end{abstract}

\begin{CCSXML}
<ccs2012>
   <concept>
       <concept_id>10002951.10003317.10003347.10003350</concept_id>
       <concept_desc>Information systems~Recommender systems</concept_desc>
       <concept_significance>500</concept_significance>
       </concept>
   <concept>
       <concept_id>10010147.10010257</concept_id>
       <concept_desc>Computing methodologies~Machine learning</concept_desc>
       <concept_significance>500</concept_significance>
       </concept>
 </ccs2012>
\end{CCSXML}

\ccsdesc[500]{Information systems~Recommender systems}
\ccsdesc[500]{Computing methodologies~Machine learning}

\keywords{Coreset Selection, LLM-based Recommendation, Model Training}

\maketitle



\section{INTRODUCTION}
Large language models (LLMs) have demonstrated remarkable success in a wide range of recommendation tasks \cite{llmrec_intro_1, llmrec_intro_2, chat_rec} due to their vast knowledge and advanced capabilities \cite{recommender_era_llm}. These recommendation tasks can be mainly categorized into two paradigms \cite{wu2024survey}. The first is {\emph{discriminative recommendation}}, where LLMs predict recommendation results from a predefined label set, such as click-through rate (CTR) \cite{bao2023tallrec} or rating prediction \cite{llmrec_intro_1}. The second is {\emph{generative recommendation}}, where LLMs generate open-ended recommendation information for complex scenarios, such as sequential recommendation \cite{bao2023bistepgroundingparadigmlarge}, explanation generation \cite{llm_explain}, and conversational recommendation \cite{llm_conversation}.

In general, achieving the optimal performance of LLM-based recommender systems (LLMRecs) requires instruction fine-tuning LLMs on large-scale recommendation datasets \cite{hllm}, which often incurs unaffordable computational costs \cite{llm_rec_cost}. This challenge has made the development of efficient fine-tuning methods for LLM-based recommender systems a critical area of research. While existing parameter-efficient fine-tuning (PEFT) methods can reduce training costs by updating only a small subset of model parameters, this approach alone is insufficient to address the high computational demands posed by ever-growing recommendation datasets. In contrast, recent studies \cite{llm_coreset_related_1, LIMAzhou} in related domains have shown that fine-tuning LLMs on carefully selected small subsets can significantly reduce computational overheads while maintaining or even boosting model performance. It is an observation aligning with recent findings \cite{rsdatasurvey} in recommender systems which highlights the key role of data quality in improving both model performance and training efficiency. However, this promising data-side optimization strategy, commonly referred to as \emph{coreset selection}, remains seldom explored for LLM-based recommender systems.

The goal of coreset selection is to minimize the test loss by selecting a small but representative subset of whole training data with the given budget, thus enabling efficient fine-tuning \cite{llm_coreset_related_2}. However, existing techniques of coreset selection, including \emph{distribution-based methods} and \emph{importance-based methods}, often struggle to achieve this goal. Distribution-based methods \cite{D2, CCS, DSIR} aim to cover the entire dataset through stratified sampling or graph-based algorithms. While effective on capturing feature space distributions, these methods fail to directly minimize the test loss and suffer from poor alignment with the optimization objectives of downstream fine-tuning tasks, resulting in suboptimal performance \cite{BOSS}. On the other hand, importance-based methods \cite{dealrec, EL2N, importance_method} rank samples according to their training contribution and select top-$K$ samples. However, such individual-level selection strategy often overemphasizes the high-importance samples near decision boundaries, limiting the model's generalization to other samples \cite{if_not_good}. Moreover, in recommender systems, data characteristics like user-item interactions and temporal dependencies naturally form inter-sample correlations. However, individual-level methods \cite{EL2N} which focus on isolated samples, inherently overlook these collective structures, thus failing to create a truly representative coreset.

To address these limitations, we identify two key objectives for coreset selection task to improve LLMRecs: (O1) selecting the subsets that {\emph{minimize the test loss}} to align with downstream objectives; (O2) adopting {\emph{group-level}} subset selection, i.e., evaluating the collective quality of a group of samples together, rather than separately considering each individual sample’s importance, to capture inherent inter-sample correlations in the recommendation data and ensure the model's generalization capability. However, achieving these two objectives still faces two major challenges. 

    \noindent \textbf{Q1 - Computational overhead}: Computing the test loss for any subset is often prohibitive, as it requires retraining LLMs on each candidate subset, which is infeasible given the computation resource constraints of real applications \cite{trak}.
    
    \noindent \textbf{Q2 - Combinatorial explosion}: Group-level coreset selection inevitably involves searching across an exponentially large candidate space of groups, making the optimization greatly more complex than traditional individual-level importance-based methods \cite{hard-np}.

To overcome these challenges, in this paper we propose a novel \textbf{G}roup-level \textbf{O}ptimal t\textbf{RA}nsport-guided \textbf{C}oreset \textbf{S}election framework for LLMRecs, namely \textbf{GORACS}. Our framework consists of two key components corresponding to the challenges: a computationally efficient \emph{Proxy Optimization Objective (POO)} and a two-stage \emph{Initialization-Then-Refinement Algorithm (ITRA)}.

    \noindent \textbf{Proxy Optimization Objective (POO)}: To reduce the cost of computing the test loss (\textbf{Q1}), we develop a proxy objective POO combining optimal transport (OT) distance \cite{villani2003topics} and gradient information. Leveraging Kantorovich-Rubinstein duality \cite{KRD}, we bound the difference between training loss and test loss using the OT distance. Additionally, we bound training loss efficiently via gradient norm analysis, thus avoiding repeated model retraining and evaluation. By integrating these approaches, we derive the POO as an upper bound of the test loss. This enables us to estimate the test loss using the POO, leading to significantly reduced computational overhead in subset quality assessment.
    
    \noindent \textbf{Initialization-Then-Refinement Algorithm (ITRA)}: To tackle the combinatorial complexity of group-level subset selection (\textbf{Q2}), the first stage of ITRA solves a relaxed form of the proxy objective via greedy search to quickly generate a high-quality initial solution. The second stage refines this solution through sample exchanges, guided by a novel pruning strategy that identifies promising exchanges based on marginal improvement estimations. Our ITRA significantly reduces complexity of group-level optimization while ensuring the model's strong performance.

Furthermore, we extend our GORACS for discriminative recommendation tasks by incorporating label information. Specifically, we decompose the joint distribution into class-conditional components, enabling fine-grained selection for each class while maintaining balanced class proportions. Our extensive experiments conducted on both generative and discriminative recommendation tasks across multiple datasets validate GORACS' effectiveness. 

In summary, our major contributions in this paper include:

     1. We propose a group-level coreset selection framework GORACS based on optimal transport to address the challenge of selecting coresets for efficient LLMRecs fine-tuning. Our framework successfully bridges the gap between data selection and downstream task performance, effectively achieving test loss minimization.
    
     2. We design a novel proxy optimization objective (POO) to reduce computational overhead of subset quality assessment, and introduce an efficient two-stage ITRA algorithm to tackle the combinatorial explosion of group-level selection, thus enabling efficient and effective coreset selection.
    
     3. We further enhance GORACS for discriminative recommendation tasks by incorporating label information, which ensures fine-grained class representation and improves classification performance. Our extensive experiments across generative and discriminative tasks on multiple datasets validate the effectiveness and efficiency of GORACS and its components.

\section{RELATED WORK}
\subsection{LLM-based Recommendation}
LLMs have introduced new possibilities for recommender systems by leveraging their broad knowledge and advanced capabilities \cite{wu2024survey,fairnesssurvey}. Although methods such as in-context learning and prompting \cite{popularity_2,Isgptliu} have been explored, a major challenge in LLMRecs is aligning LLMs with the specific requirements of recommendation tasks. Recently, instruction fine-tuning has shown promise in improving LLMs' adaptability for recommendation \cite{ALLM,bao2023tallrec} while it is computationally expensive and highly dependent on high-quality data. Notably, high-quality data has been shown to outperform large-scale datasets on improving model performance \cite{wangdatasurvey,fromqtoqli}. Zhou et al. \cite{LIMAzhou} demonstrated that fine-tuning LLMs on as few as 1,000 carefully selected samples can significantly boost generalization to unseen tasks, underscoring the critical role of coreset selection, which however remains underexplored in the context of LLMRecs.

\subsection{Coreset Selection}
Existing coreset selection methods \cite{importance_method,ingenicorslect,moderate} can be broadly categorized into two types. 1) Distribution-based methods \cite{craig, CCS, D2} seek to select a subset that preserves the dataset distribution in feature space through various distribution matching or covering strategies. Wherein, FDMat \cite{fdmat} employs optimal transport for distribution matching, which is technically similar to our GORACS. However, these methods including FDMat, neglect directly optimizing test loss and lack aligning with downstream fine-tuning. 2) Importance-based methods \cite{forgetscore,farewell, EL2N, adacore} rank and select samples based on difficulty metrics, assuming that harder samples are more valuable for training. Since previous metrics are often computationally intensive \cite{LESS} and thus impractical for LLMRecs, DEALRec \cite{dealrec} leverages a surrogate recommender model to efficiently estimate the influence of removing individual samples on the training loss. Despite their contributions, these methods often prioritize high-impact individual samples that always locate on the decision boundary, thus hindering the model’s generalization capability to other samples \cite{if_not_good,CCS}. To address these issues, our GORACS directly optimizes for test loss and leverages group-level selection, effectively improving recommendation fine-tuning performance.

\section{PRELIMINARIES}
Before presenting our framework, we first introduce preliminaries on LLMRecs and the coreset selection tasks. We also cover key concepts of optimal transport which are the basics of our method.

\subsection{LLM-based Recommender Systems}
LLMRecs leverage LLMs to generate recommendation results by converting recommendation tasks into Q\&A problems. In general, the input data for the LLM, such as user-item historical interactions, are formatted as the prompt $\boldsymbol{x}$ to encourage the LLM to output the results $\boldsymbol{y}$. LLMRecs can be mainly categorized into \cite{wu2024survey}:
\begin{itemize}[leftmargin=*]
    \item Discriminative Recommendation: The LLM predicts (selects) the recommendation results from a small candidate (label) set, such as click-through rate (CTR) \cite{bao2023tallrec} or rating prediction \cite{llmrec_intro_1}.
    \item Generative Recommendation: The LLM directly generates open-ended recommendation results for complex scenarios, such as sequential recommendations \cite{bao2023bistepgroundingparadigmlarge} or explanation generation \cite{llm_explain}.
\end{itemize}
However, as LLMs lack specialized training on recommendation data, fine-tuning is essential to develop effective LLMRecs \cite{dealrec}, which optimizes parameters $\phi$ by minimizing the training loss:
\begin{equation}
    \label{LLMRec_loss}
    \min_{\phi} \left\{\mathcal{L}_\phi(\mathcal{T})=\frac{1}{|\mathcal{T}|}\sum_{(\boldsymbol{x}, \boldsymbol{y})\in \mathcal{T}}\sum_{t=1}^{|\boldsymbol{y}|}-\log P_\phi(\boldsymbol{y}_t|\boldsymbol{x}, \boldsymbol{y}_{<t})\right\},
\end{equation}
where $\mathcal{T}=\{(\boldsymbol{x}_i, \boldsymbol{y}_i)\}_{i=1}^{|\mathcal{T}|}$ represents the training (fine-tuning) dataset. $\boldsymbol{y}_t$ is the $t$-th token in the token sequence $\boldsymbol{y}$, and $\boldsymbol{y}_{<t}$ denotes the tokens before $\boldsymbol{y}_t$. However, fine-tuning on the entire dataset is generally expensive, making efficiency improvement crucial for developing LLMRecs \cite{llm_rec_cost}.

\subsection{Coreset Selection Task for LLM-based Recommendation}\label{sec:task}
To reduce training costs, recent studies have explored fine-tuning LLMRecs on the subsets of full training data. However, existing data selection strategies, such as random sampling \cite{bao2023tallrec} and influence-based methods \cite{dealrec}, do not directly optimize test performance of LLMRecs (i.e., {\emph{ minimizing test loss}}), leading to suboptimal results. To address it, we introduce the {\emph{coreset selection}} task for LLMRecs, which directly takes test loss minimization as the criterion for subset selection. Formally, consider a recommendation task with training dataset $\mathcal{T}$ containing $|\mathcal{T}|$ samples. The goal of coreset selection is to find the optimal subset $\mathcal{S}^*_{\text{opt}}$ of size $n$ from $\mathcal{T}$ that minimizes the expected loss over the test distribution (denoted by $\mathbb{P}$):
\begin{equation}
\label{Coreset_selection}
\mathcal{S}^*_{\text{opt}} = \underset{{\mathcal{S}:\mathcal{S}\subset \mathcal{T}}, |\mathcal{S}|= n}{\text{argmin}}\mathbb{E}_{\boldsymbol{z}\sim \mathbb{P}}[\mathcal{L}_{\phi^*_{\mathcal{S}}}(\boldsymbol{z})]\,\, \text{s.t.} \,\,\phi^*_{\mathcal{S}} = \underset{\phi}{\text{argmin}}\,\mathcal{L}_{\phi}(\mathcal{S}).
\end{equation}
Here, $\boldsymbol{z}=(\boldsymbol{x}, \boldsymbol{y})$ represents a recommendation data point. To the best of our knowledge, we are the first to apply the goal of Eq.\ref{Coreset_selection} in LLMRecs.  While bi-level optimization methods \cite{bi_level_1, glister, bi_level_2} have been utilized to solve Eq. \ref{Coreset_selection} in simpler scenarios, they are impractical for LLMRecs due to the high cost of training LLMs. Instead, we approach the solution of Eq. \ref{Coreset_selection} by analyzing the potential distributional gap between $\mathcal{S}$ and $\mathbb{P}$. Intuitively, if the distribution of subset $\mathcal{S}$ closely resembles $\mathbb{P}$, a model trained on $\mathcal{S}$ is likely to generalize well to $\mathbb{P}$, thereby achieving a lower expected test loss. To fulfill this insight, we employ the Optimal Transport distance \cite{villani2003topics} to effectively quantify the discrepancy between distributions.

\subsection{Basics on Optimal Transport}
\label{basic_ot}
Optimal Transport (OT) \cite{villani2003topics} is a mathematical theory for measuring the discrepancies between distributions, and we focus on its discrete version. Formally, let $(\mathcal{Z}, d)$ be a metric space with a metric $d:\mathcal{Z}\times \mathcal{Z}\to \mathbb{R}^+$. Suppose $\{\boldsymbol{z}_i\}_{i=1}^m\subset \mathcal{Z}$ and $\{\boldsymbol{z}'_j\}_{j=1}^n\subset \mathcal{Z}$. Then, given two discrete probability measures $\mu_1=\sum_{i=1}^m p_i\delta(\boldsymbol{z}_i), \mu_2=\sum_{j=1}^n q_j\delta(\boldsymbol{z}'_j)$ defined\footnote{Here $\delta(\cdot)$ denotes the Dirac delta function, and $\sum_i p_i=\sum_j q_j=1$.} on $\mathcal{Z}$ with probability mass vectors $\boldsymbol{p}=(p_i)_{i=1}^m, \boldsymbol{q}=(q_j)_{j=1}^n$, and a cost matrix $\boldsymbol{\mathrm{C}}\in \mathbb{R}^{m\times n}$, the OT distance between $\mu_1$ and $\mu_2$ with respect to $\boldsymbol{\mathrm{C}}$ is defined as
\begin{equation}
\label{OT_c}
    OT_{\boldsymbol{\mathrm{C}}}(\mu_1, \mu_2):=\min_{\boldsymbol{\pi}\in\Pi(\mu_1, \mu_2)}\langle\boldsymbol{\pi}, \boldsymbol{\mathrm{C}}\rangle_F,
\end{equation}
where $\Pi(\mu_1, \mu_2):=\{\boldsymbol{\pi}\in \mathbb{R}^{m\times n}:\sum_{i}\pi_{ij}=q_j$, $\sum_j\pi_{ij}=p_i, \pi_{ij}\ge 0\}$ denotes a collection of discrete distribution couplings between $\mu_1$ and $\mu_2$, and $\langle,\rangle_F$ represents the Frobenius inner product. Actually, Eq. \ref{OT_c} is a linear programming problem, for which many efficient computation methods have been proposed \cite{sinkhorn, compute_ot, ot_fast}. 
Additionally, $OT_{\boldsymbol{\mathrm{C}}}$ can also be derived from its dual problem \cite{villani2003topics}: 
$$OT_{\boldsymbol{\mathrm{C}}}(\mu_1, \mu_2)=\sup_{\boldsymbol{u} \oplus \boldsymbol{v}\le \boldsymbol{\mathrm{C}}} (\boldsymbol{p}^T\boldsymbol{u} + \boldsymbol{q}^T\boldsymbol{v}),$$ where $\boldsymbol{u}\oplus \boldsymbol{v}\le \boldsymbol{\mathrm{C}}$ denotes $u_i + v_j\le C_{ij}, \forall, (i, j)$. 
$\boldsymbol{u}\in\mathbb{R}^m$ and $\boldsymbol{v}\in\mathbb{R}^n$ are the dual variables of $OT_{\boldsymbol{\mathrm{C}}}$ associated with $\mu_1$ and $\mu_2$, respectively. 

\begin{figure*}[t!]  
    \centering  
    \includegraphics[width=\linewidth]{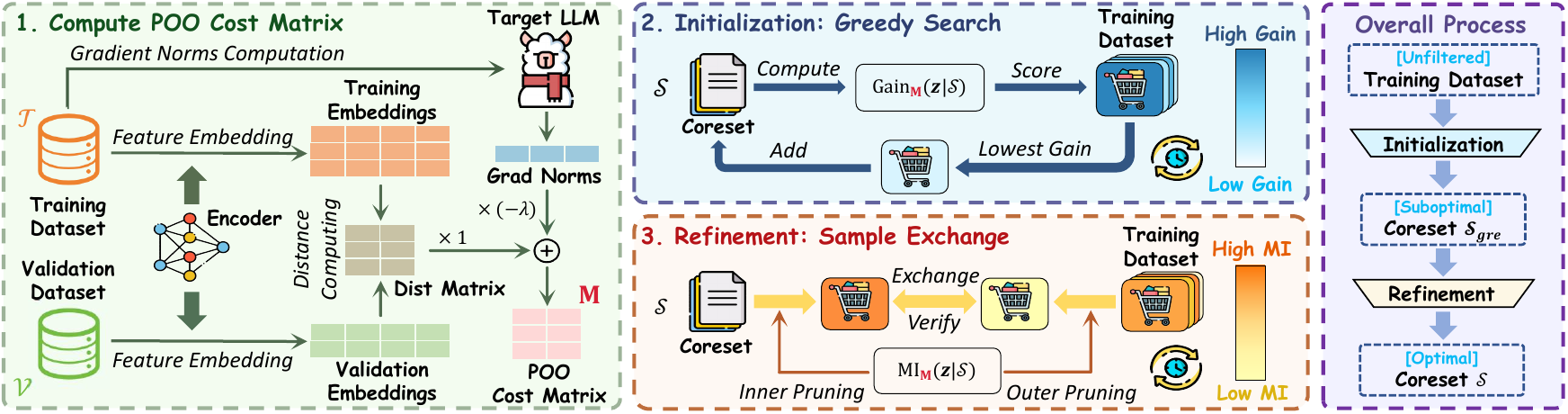}
    \caption{The overview of GORACS. It selects a representative coreset $\mathcal{S} \subset \mathcal{T}$ to minimize the POO score (Eq. \ref{score}) which is proven to be an upper bound on the test loss (Section \ref{sec:poo}). To this end, GORACS's pipline consists of three phases: 1) computing feature embedding distances and gradient norms to construct the POO cost matrix ${\boldsymbol{\mathrm{M}}}$ (Section \ref{sec:itra1}); 2) building an initial coreset $\mathcal{S}_{gre}$ by greedily adding samples with the lowest Gain scores (Section \ref{sec:itra2}); 3) refining $\mathcal{S}_{gre}$ via iteratively exchanging low-quality samples in $\mathcal{S}_{gre}$ with high-quality samples outside $\mathcal{S}_{gre}$, of which the quality is measured by the MI score (Section \ref{sec:itra3}).
    \vspace{-1em}
    }\label{fig:overview}
\end{figure*}

When using distance between points as the cost, i.e., using $\boldsymbol{\mathrm{D}}=(d(\boldsymbol{z}_i, \boldsymbol{z}'_j))_{ij}\in \mathbb{R}^{m\times n}$ as the cost matrix, the resulting $OT_{\boldsymbol{\mathrm{D}}}(\mu_1, \mu_2)$ enjoys a key advantage: it bounds the performance discrepancy of a model trained on one distribution and evaluated on another. This property is largely derived from the Kantorovich-Rubinstein Duality. Formally, let $\mathrm{Lip}-L$ denote the set of $L$-Lipschitz functions on $(\mathcal{Z}, d)$, i.e., $\mathrm{Lip}-L:=\{f:|f(\boldsymbol{z})-f(\boldsymbol{z}')|\le L\cdot d(\boldsymbol{z}, \boldsymbol{z}'), \forall \boldsymbol{z}, \boldsymbol{z}' \in \mathcal{Z}\}$. The Kantorovich-Rubinstein Duality \cite{KRD} states that
\begin{equation}
\label{KR}
    OT_{\boldsymbol{\mathrm{D}}}(\mu_1, \mu_2) = \frac{1}{L}\cdot \sup_{f\in \mathrm{Lip}-L} \left|\mathbb{E}_{\boldsymbol{z}\sim \mu_1}[f(\boldsymbol{z})]-\mathbb{E}_{\boldsymbol{z}'\sim \mu_2}[f(\boldsymbol{z}')]\right|.
\end{equation}
Thus,  a smaller $OT_{\boldsymbol{\mathrm{D}}}$ implies smaller difference between the expectations taken over two distributions.  When $f$ is chosen as a loss function, $\mathbb{E}_{\boldsymbol{z}\sim \mu}[f(\boldsymbol{z})]$ corresponds to the expected loss of the distribution $\mu$. This provides theoretical intuition for leveraging $OT_{\boldsymbol{\mathrm{D}}}$ as a proxy metric to evaluate the testing performance of a coreset.

\section{METHODOLOGY}
In this section, we present our coreset selection framework GORACS in detail. Specifically, we first design a proxy objective POO to approximate the solution of Eq. \ref{Coreset_selection}, and then propose the ITRA algorithm to solve the proxy optimization problem efficiently. Finally, we improve our framework for discriminative recommendation tasks by leveraging label information. The proofs of the theorems proposed in this section are presented in Appendix \ref{sec:appendix}. An overview of our approach is illustrated in Figure \ref{fig:overview}.

\subsection{Proxy Optimization Objective} 
\label{sec:poo}

As we mentioned before, directly optimizing Eq. \ref{Coreset_selection} is computationally infeasible due to the high cost of LLM fine-tuning. Therefore, we propose a Proxy Optimization Objective (POO) that tightly bounds the original criterion and remains computationally efficient. The POO consists of two components: 1) bounding the generalization gap between training loss and test loss using OT distance, and 2) bounding train loss via gradient norm analysis.

\subsubsection{{OT Distance Bounds for Recommendation Performance Gap}}\label{subsubsection:ot_bound}

As outlined in Section \ref{sec:task}, intuitively, when the discrepancy between the distribution of $\mathcal{S}$ and test distribution $\mathbb{P}$ is small, a model trained on $\mathcal{S}$ is likely to generalize well to $\mathbb{P}$, thereby reducing the gap between training loss and test loss. Building on Kantorovich-Rubinstein Duality (Eq. \ref{KR}), we leverage OT distance to quantify this generalization gap. To formalize this, let $\mu_{\mathcal{D}}:=\frac{1}{|\mathcal{D}|}\sum_{\boldsymbol{z}\in \mathcal{D}}\delta(\boldsymbol{z})$ denote the empirical distribution of a recommendation dataset $\mathcal{D}$. Each data point in $\mathcal{D}$, denoted by $\boldsymbol{z}$, represents a single instance containing user interaction information, which is formatted into a text prompt using recommendation-specific instruction templates. Following prior work \cite{get_more, tarot}, we embed $\boldsymbol{z}$ into $\mathbb{R}^N$ using a pre-trained encoder\footnote{We use Roberta-base\cite{liu2019roberta} to embed the textual data. We also compare different encoders in our ablation studies (Section \ref{sec:robustness}).} $E(\cdot)$. Given any metric\footnote{In this work, we simply utilize $L^2$ distance, while other metrics can also be applied.} $d$ on $\mathbb{R}^N$, we define the metric on $\mathcal{D}$ as $d^*(\boldsymbol{z}, \boldsymbol{z}')=d(E(\boldsymbol{z}), E(\boldsymbol{z}'))$, making $(\mathcal{D}, d^*)$ a metric space. Since the test distribution $\mathbb{P}$ is inaccessible, we follow established practice \cite{lava, glister} to approximate it using a held-out validation set $\mathcal{V}$. A more sophisticated strategy might involve exploiting temporal information within users' historical interactions to better simulate the test distribution\footnote{We leave this promising direction as future work.}. Thus, we propose the following theorem.
\begin{theorem}\label{lemma:1}
Let $\mathcal{D}$ be the full dataset, with training set $\mathcal{T}\subset \mathcal{D}$ and validation set $\mathcal{V} \subset \mathcal{D}$. Given a coreset $\mathcal{S}\subset \mathcal{T}$, suppose the loss function $\mathcal{L}_{\phi_{\mathcal{S}}^*}(\cdot)$ is $L$-Lipschitz with respect to the metric space $(\mathcal{D}, d^*)$. Denote $\mu_{\mathcal{S}}$ and $\mu_{\mathcal{V}}$ as the empirical distribution over $\mathcal{S}$ and $\mathcal{V}$ respectively. Let $\boldsymbol{\mathrm{D}}^*=(d^*(\boldsymbol{z}_i, \boldsymbol{z}'_j))_{ij}$ be the distance matrix between points in $\mathcal{T}$ and $\mathcal{V}$. Then the following inequality holds:
\begin{equation}
\mathbb{E}_{\boldsymbol{z}'\sim \mathbb{P}}[\mathcal{L}_{\phi_{\mathcal{S}}^*}(\boldsymbol{z}')] \leq \mathbb{E}_{\boldsymbol{z}\sim \mathbb{\mu_{\mathcal{S}}}}[\mathcal{L}_{\phi^*_{\mathcal{S}}}(\boldsymbol{z})] + L\cdot OT_{\boldsymbol{\mathrm{D}}^*}(\mu_{\mathcal{S}}, \mu_\mathcal{V}),
\label{eq:ot_bound}
\end{equation}
where $OT_{\boldsymbol{\mathrm{D}}^*}(\mu_{\mathcal{S}}, \mu_\mathcal{V})$ denotes the OT distance with cost matrix $\boldsymbol{\mathrm{D}}^*$. 
\end{theorem}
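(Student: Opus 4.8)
The plan is to reduce the statement to a single application of the Kantorovich--Rubinstein duality (Eq.~\ref{KR}), after replacing the inaccessible test expectation by its empirical validation counterpart. First I would invoke the working approximation $\mathbb{P}\approx\mu_\mathcal{V}$ already motivated in Section~\ref{subsubsection:ot_bound}, so that $\mathbb{E}_{\boldsymbol{z}'\sim\mathbb{P}}[\mathcal{L}_{\phi_\mathcal{S}^*}(\boldsymbol{z}')]$ is identified with $\mathbb{E}_{\boldsymbol{z}'\sim\mu_\mathcal{V}}[\mathcal{L}_{\phi_\mathcal{S}^*}(\boldsymbol{z}')]$. The whole argument then concerns the gap between the expected loss taken over $\mu_\mathcal{S}$ and over $\mu_\mathcal{V}$, both of which are honest empirical quantities in the metric space $(\mathcal{D}, d^*)$.

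The key step is to note that, by hypothesis, $\mathcal{L}_{\phi_\mathcal{S}^*}(\cdot)\in\mathrm{Lip}-L$ with respect to $d^*$, so it is one admissible competitor in the supremum on the right-hand side of the duality applied to the pair $(\mu_\mathcal{S}, \mu_\mathcal{V})$ with cost matrix $\boldsymbol{\mathrm{D}}^*$. This immediately yields
\[
\left|\mathbb{E}_{\boldsymbol{z}\sim\mu_\mathcal{S}}[\mathcal{L}_{\phi_\mathcal{S}^*}(\boldsymbol{z})] - \mathbb{E}_{\boldsymbol{z}'\sim\mu_\mathcal{V}}[\mathcal{L}_{\phi_\mathcal{S}^*}(\boldsymbol{z}')]\right| \le \sup_{f\in\mathrm{Lip}-L}\left|\mathbb{E}_{\mu_\mathcal{S}}[f]-\mathbb{E}_{\mu_\mathcal{V}}[f]\right| = L\cdot OT_{\boldsymbol{\mathrm{D}}^*}(\mu_\mathcal{S}, \mu_\mathcal{V}).
\]
Dropping the absolute value on the left, keeping only the direction that upper-bounds the validation expectation, and rearranging gives
\[
\mathbb{E}_{\boldsymbol{z}'\sim\mu_\mathcal{V}}[\mathcal{L}_{\phi_\mathcal{S}^*}(\boldsymbol{z}')] \le \mathbb{E}_{\boldsymbol{z}\sim\mu_\mathcal{S}}[\mathcal{L}_{\phi_\mathcal{S}^*}(\boldsymbol{z})] + L\cdot OT_{\boldsymbol{\mathrm{D}}^*}(\mu_\mathcal{S}, \mu_\mathcal{V}),
\]
which is exactly the claimed bound once $\mu_\mathcal{V}$ is read back as $\mathbb{P}$. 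One bookkeeping point I would make explicit is that $\boldsymbol{\mathrm{D}}^*$ is defined as the full $\mathcal{T}\times\mathcal{V}$ distance matrix, whereas the duality is applied on the support of $\mu_\mathcal{S}$; since $\mathcal{S}\subset\mathcal{T}$, the relevant cost is simply the submatrix of $\boldsymbol{\mathrm{D}}^*$ indexed by the rows belonging to $\mathcal{S}$, and $OT_{\boldsymbol{\mathrm{D}}^*}(\mu_\mathcal{S}, \mu_\mathcal{V})$ should be understood accordingly.

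The main obstacle is not the duality manipulation, which is essentially a one-line consequence of Eq.~\ref{KR}, but rather justifying the two assumptions that license it. The substantive one is the $L$-Lipschitz hypothesis on $\mathcal{L}_{\phi_\mathcal{S}^*}$: this is what permits treating the loss as a valid test function $f$, and it is genuinely nontrivial for an LLM loss, so I would flag it as a standing assumption rather than something derivable within the proof. The second is the identification $\mathbb{P}\approx\mu_\mathcal{V}$, which is an approximation rather than an equality; I would state clearly that the inequality is \emph{exact} for $\mu_\mathcal{V}$ and only becomes a statement about $\mathbb{P}$ under this empirical surrogate, absorbing any validation-sampling discrepancy into the modeling assumption.
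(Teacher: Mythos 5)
Your proposal is correct and follows essentially the same route as the paper's own proof: approximate $\mathbb{P}$ by $\mu_{\mathcal{V}}$, treat $\mathcal{L}_{\phi_{\mathcal{S}}^*}$ as an admissible $L$-Lipschitz test function in the Kantorovich--Rubinstein duality (Eq.~\ref{KR}), and rearrange. The paper states this in one line; your version merely makes the same steps (and the same two caveats) explicit.
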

This bound includes two terms: 1) the training loss, which reflects the optimization dynamics of $\mathcal{S}$ but is costly for computation, and 2) the OT distance, which measures distributional discrepancy and is computationally efficient. Previous studies often assume that training loss is zero \cite{BOSS, CCS} for simplicity, yet LLMs in fine-tuning typically converge without reaching zero training loss, as final models reflect a combination of pre-training and fine-tuning distributions \cite{get_more, liu2019roberta}. Additionally, due to the differences in parameter sizes and pre-training data, LLMs possess distinct knowledge encoded in parameters, which impacts how LLMs utilize training samples to adapt to downstream tasks. To illustrate it, we quantify a training sample's contribution to training by its gradient norm, as it measures how much the sample updates model parameters and reflects the gap between the sample's information and the model's knowledge \cite{dealrec}. Figure \ref{games_grads} (a) shows distinct gradient norm distributions across LLMs on the Food dataset from Amazon, demonstrating models’ unique requirements for fine-tuning samples. Therefore, it is essential to preserve and efficiently estimate the training loss in Eq. \ref{eq:ot_bound} to effectively capture model-specific information.

\subsubsection{{Gradient-Based Analysis for Bounding Training Loss}} 
Inspired by the recent findings that early gradient norms effectively identify samples critical for training process \cite{EL2N}, we analyze how training data influence training via gradient descent, to estimate the training loss without fine-tuning on $\mathcal{S}$. Unlike prior work analyzing LLM training under stochastic gradient descent (SGD) \cite{craig}, we adopt full-batch gradient descent (GD) for theoretical analysis since we focus on training on small coresets (e.g., $|\mathcal{S}| \le 1024$). Therefore, the trainable parameters $\phi^t$ at step $t$ are updated as:
\begin{equation}  
    \label{gradient_descent}
    \phi^{t+1} = \phi^{t} -  \frac{ \eta^{t}}{|\mathcal{S}|}\sum_{\boldsymbol{z}\in \mathcal{S}} \nabla_\phi \mathcal{L}_{\phi^{t}}(\boldsymbol{z}).  
\end{equation}  
Focusing on the initial step $(t=0)$, we bound the training loss without full fine-tuning by proving the following theorem.
\begin{theorem} \label{theo:grad}
Consider the LLM fine-tuning following Eq. \ref{gradient_descent} on a small subset $\mathcal{S}\subset\mathcal{T}$ and suppose $H_\mathcal{S}(\phi)=\frac{1}{|\mathcal{S}|}\sum_{\boldsymbol{z}\in \mathcal{S}} \mathcal{L}_{\phi}(\boldsymbol{z})$ is $G$-smooth with respect to parameters $\phi$. If the learning rate $\eta^0$ at step $0$ satisfies $0<\eta^0<\frac{2}{G}$, then we have:
\begin{equation}
\label{theo4.2}
\begin{aligned}
    H_{\mathcal{S}}(\phi_{\mathcal{S}}^*)=\mathbb{E}_{\boldsymbol{z}\sim\mathbb{\mu_{\mathcal{S}}}}[\mathcal{L}_{\phi^*_{\mathcal{S}}}(\boldsymbol{z})] 
    \le \Lambda-\frac{C}{|\mathcal{S}|} \sum_{\boldsymbol{z}\in \mathcal{S}} \|\nabla_{\phi}\mathcal{L}_{\phi^0}(\boldsymbol{z})\|,
\end{aligned}
\end{equation}
where $\Lambda = \underset{\boldsymbol{z}\in \mathcal{T}}{\max} \,\mathcal{L}_{\phi^0}(\boldsymbol{z})$ and $C$ is a constant irrelevant to $\mathcal{S}$. 
\end{theorem}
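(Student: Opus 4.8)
The plan is to exploit the fact that the full-batch update in Eq.~\ref{gradient_descent} at step $0$ is exactly a single gradient step on $H_{\mathcal{S}}$, namely $\phi^1 = \phi^0 - \eta^0 \nabla_\phi H_{\mathcal{S}}(\phi^0)$, and then combine the standard descent lemma for $G$-smooth functions with the optimality of $\phi_{\mathcal{S}}^*$. Concretely, $G$-smoothness of $H_{\mathcal{S}}$ yields the quadratic upper model
\[
H_{\mathcal{S}}(\phi^1) \le H_{\mathcal{S}}(\phi^0) + \langle \nabla_\phi H_{\mathcal{S}}(\phi^0), \phi^1 - \phi^0\rangle + \frac{G}{2}\|\phi^1 - \phi^0\|^2,
\]
and substituting the update direction collapses the right-hand side to $H_{\mathcal{S}}(\phi^0) - \eta^0\bigl(1 - \tfrac{G\eta^0}{2}\bigr)\|\nabla_\phi H_{\mathcal{S}}(\phi^0)\|^2$. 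The constraint $0 < \eta^0 < 2/G$ guarantees $\alpha := \eta^0(1 - G\eta^0/2) > 0$, so this is a genuine decrease. Since $\phi_{\mathcal{S}}^*$ globally minimizes $\mathcal{L}_\phi(\mathcal{S}) = H_{\mathcal{S}}(\phi)$, we have $H_{\mathcal{S}}(\phi_{\mathcal{S}}^*) \le H_{\mathcal{S}}(\phi^1)$, and bounding the initial-loss term by $H_{\mathcal{S}}(\phi^0) = \frac{1}{|\mathcal{S}|}\sum_{\boldsymbol{z}\in\mathcal{S}}\mathcal{L}_{\phi^0}(\boldsymbol{z}) \le \max_{\boldsymbol{z}\in\mathcal{T}}\mathcal{L}_{\phi^0}(\boldsymbol{z}) = \Lambda$ (using $\mathcal{S}\subset\mathcal{T}$) produces the intermediate estimate $H_{\mathcal{S}}(\phi_{\mathcal{S}}^*) \le \Lambda - \alpha\|\nabla_\phi H_{\mathcal{S}}(\phi^0)\|^2$.

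The remaining and genuinely delicate step is to convert the squared norm of the averaged gradient, $\|\nabla_\phi H_{\mathcal{S}}(\phi^0)\|^2 = |\mathcal{S}|^{-2}\bigl\|\sum_{\boldsymbol{z}\in\mathcal{S}}\nabla_\phi\mathcal{L}_{\phi^0}(\boldsymbol{z})\bigr\|^2$, into the per-sample form $\frac{C}{|\mathcal{S}|}\sum_{\boldsymbol{z}\in\mathcal{S}}\|\nabla_\phi\mathcal{L}_{\phi^0}(\boldsymbol{z})\|$ appearing in the claim. This is the main obstacle: the triangle inequality only gives $\|\nabla_\phi H_{\mathcal{S}}(\phi^0)\| \le \frac{1}{|\mathcal{S}|}\sum_{\boldsymbol{z}}\|\nabla_\phi\mathcal{L}_{\phi^0}(\boldsymbol{z})\|$, which is the wrong direction, and in general the norm of a sum of gradients can be far smaller than the sum of their norms and may even vanish through cancellation. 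I expect to close this gap with a structural hypothesis on the initial per-sample gradients, namely that they are non-conflicting, $\langle\nabla_\phi\mathcal{L}_{\phi^0}(\boldsymbol{z}), \nabla_\phi\mathcal{L}_{\phi^0}(\boldsymbol{z}')\rangle \ge 0$, which is plausible near initialization where every sample pushes the parameters in a broadly loss-reducing direction. Under this hypothesis all cross terms are non-negative, so $\bigl\|\sum_{\boldsymbol{z}}\nabla_\phi\mathcal{L}_{\phi^0}(\boldsymbol{z})\bigr\|^2 \ge \sum_{\boldsymbol{z}}\|\nabla_\phi\mathcal{L}_{\phi^0}(\boldsymbol{z})\|^2$.

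Finally, I would pass from the sum of squared norms to the sum of norms via a uniform lower bound $g_{\min} := \min_{\boldsymbol{z}\in\mathcal{T}}\|\nabla_\phi\mathcal{L}_{\phi^0}(\boldsymbol{z})\| > 0$, giving $\sum_{\boldsymbol{z}\in\mathcal{S}}\|\nabla_\phi\mathcal{L}_{\phi^0}(\boldsymbol{z})\|^2 \ge g_{\min}\sum_{\boldsymbol{z}\in\mathcal{S}}\|\nabla_\phi\mathcal{L}_{\phi^0}(\boldsymbol{z})\|$. Chaining these inequalities gives $\alpha\|\nabla_\phi H_{\mathcal{S}}(\phi^0)\|^2 \ge \frac{\alpha g_{\min}}{|\mathcal{S}|^2}\sum_{\boldsymbol{z}\in\mathcal{S}}\|\nabla_\phi\mathcal{L}_{\phi^0}(\boldsymbol{z})\|$, and since every candidate coreset shares the fixed budget $|\mathcal{S}| = n$, the quantities $\alpha$, $g_{\min}$, and $n$ are all independent of which samples are selected; absorbing them into $C := \alpha g_{\min}/|\mathcal{S}|$ recovers exactly the claimed bound. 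The proof proposal therefore rests on making the gradient-alignment and uniform-lower-bound assumptions explicit, with the norm conversion being the crux; the rest is the standard smooth-descent argument applied to the single step $t=0$.
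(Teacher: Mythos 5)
Your argument coincides with the paper's for its first two steps: the paper likewise applies the descent lemma for $G$-smooth functions to get $H_{\mathcal{S}}(\phi^*_{\mathcal{S}})\le H_{\mathcal{S}}(\phi^1)\le H_{\mathcal{S}}(\phi^0) - \eta^0(1-G\eta^0/2)\|\nabla_\phi H_{\mathcal{S}}(\phi^0)\|^2$ and then bounds $H_{\mathcal{S}}(\phi^0)\le\Lambda$. The divergence is exactly at the step you identify as the crux. The paper introduces no gradient-alignment hypothesis and no uniform lower bound on per-sample gradient norms; it simply sets
$$
C = \eta^0(1-G\eta^0/2)\cdot \min_{\mathcal{S}\subset \mathcal{T}} \frac{\|\nabla_{\phi} H_{\mathcal{S}}(\phi^0)\|^2}{\frac{1}{|\mathcal{S}|}\sum_{\boldsymbol{z}\in \mathcal{S}} \|\nabla_{\phi}\mathcal{L}_{\phi^0}(\boldsymbol{z})\|},
$$
the worst-case ratio over all candidate subsets, so that the desired inequality holds by construction and $C$ is independent of the particular $\mathcal{S}$ because the minimum already ranges over every choice. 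This buys the theorem exactly as stated, with no added hypotheses, at the price of a constant defined circularly in terms of the very quantity being bounded; its claimed positivity silently assumes that no subset's averaged gradient cancels to zero, which is precisely the failure mode you point out. Your version trades that tautology for two explicit, stronger structural assumptions (non-conflicting initial gradients and $g_{\min}>0$) and in return yields an interpretable constant $C=\alpha g_{\min}/|\mathcal{S}|$ with $|\mathcal{S}|=n$ fixed; it is arguably the more transparent argument, but it establishes a variant of the theorem under added hypotheses rather than the statement as literally written. Apart from this treatment of $C$, the two proofs are the same.
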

Theorem \ref{theo:grad} indicates that the samples with larger initial gradient norms contribute more to training loss reduction. Moreover, we have conducted experiments with subsets\footnote{Note that the sizes of all subsets are the same 1,024.} of various average gradient norms to train BIGRec \cite{bao2023bistepgroundingparadigmlarge} on the dataset Games.
As illustrated in Figure \ref{games_grads} (b), there is a strong negative linear correlation ($R=-0.93$) between the normalized average gradient norms of a subset and its final training loss, which empirically supports Theorem \ref{theo:grad}.
\begin{figure}[t]  
  \centering    
  \includegraphics[width=\linewidth]{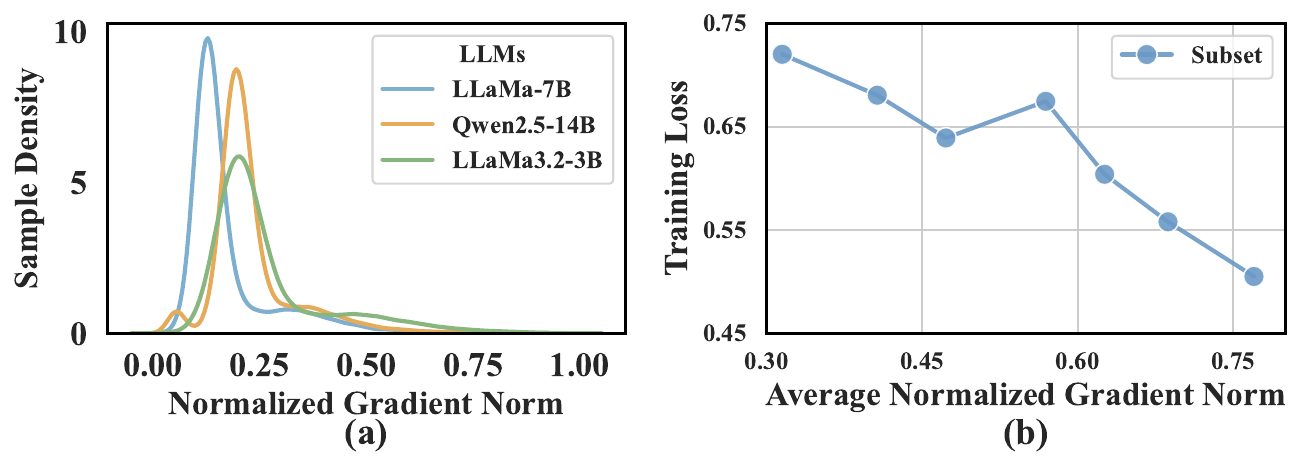}  
  \caption{(a) Distinct distributions of sample gradient norms of various LLMs. (b) The negative correlation between a subset's average gradient norms and its training loss.}  
  \label{games_grads}  
  \Description{This figure composes of two sub-figures. The left figure displays the different distributions of gradient norms of three LLMs, namely LLaMa-7B, Qwen2.5-14B and LLaMa3.2-3B. The right one shows a strong negative relationship between the gradient norms of subsets and their training losses.}
\end{figure} 

\subsubsection{{Overall Computationally Efficient Bound}}
By combining Theorem \ref{lemma:1} and Theorem \ref{theo:grad}, we derive the overall bound:
\begin{equation}
\label{bound}
\begin{aligned}
    \mathbb{E}_{\mathbb{P}}[\mathcal{L}_{\phi^*_{\mathcal{S}}}] \le L\cdot OT_{\boldsymbol{\mathrm{D}}^*}(\mu_{\mathcal{S}}, \mu_{\mathcal{V}}) - \frac{C}{|\mathcal{S}|}\sum_{\boldsymbol{z}\in \mathcal{S}}\|\nabla_{\phi}\mathcal{L}_{\phi^0}(\boldsymbol{z})\|+\Lambda.
\end{aligned}
\end{equation}

To select $\mathcal{S}$ that minimizes the test loss on the left-hand side, we can instead minimize the upper bound on the right-hand side. To this end, we define the following POO score (denoted by $\mathbb{S}(\cdot)$) to represent the expression on the right-hand side of Eq. \ref{bound}:
\begin{equation}
\label{score}
\mathbb{S}(\mathcal{S}) := OT_{\boldsymbol{\mathrm{D}}^*}(\mu_{\mathcal{S}}, \nu_\mathcal{V}) - \frac{\lambda}{|\mathcal{S}|} \sum_{\boldsymbol{z}\in \mathcal{S}}|\nabla_{\phi}\mathcal{L}_{\phi^0}(\boldsymbol{z})|,\,\,\, \textbf{[POO Score]}
\end{equation}
where $\lambda\ge 0$ is a hyper-parameter to balance the two terms. By minimizing $\mathbb{S}(\mathcal{S})$ with a proper $\lambda$, we identify an optimal subset $$ \mathcal{S}^*=\underset{{\mathcal{S}\subset\mathcal{T}, |\mathcal{S}|=n}}{\text{argmin}} \mathbb{S}(\mathcal{S}),$$  which ensures a low test loss as confirmed by Eq. \ref{bound}. Consequently, this \textbf{group-level} selection approach offers a practical and efficient method for approximating the optimal solution of Eq. \ref{Coreset_selection} using $\mathcal{S}^*$.

\subsection{Initialization-Then-Refinement Algorithm}

The Initialization-Then-Refinement Algorithm (ITRA) introduced in this part is developed to efficiently minimize the POO score $\mathbb{S}(\cdot)$ (Eq. \ref{score}). To this end, we first reformulate it as an OT distance with a special cost matrix $\boldsymbol{\mathrm{M}}$ (Eq. \ref{poo_matrix}) that combines embedding distance and gradient norm. Then, we propose a two-stage algorithm ITRA that fully utilizes the properties of OT distance. The first stage of ITRA 
employs constraint relaxation and greedy search to obtain an initial high-quality solution, and the second stage refines it via sample exchanges accelerated by a novel pruning strategy.

\subsubsection{{Reformulate POO}}
\label{sec:itra1}
Given $\mathcal{T}=\{\boldsymbol{z}_i\}_{i=1}^{|\mathcal{T}|}$ and $\mathcal{V}=\{\boldsymbol{z}_j'\}_{j=1}^{|\mathcal{V}|}$, the POO score $\mathbb{S}(\mathcal{S})$ can be equivalently expressed as the following OT distance, which directly follows from applying Eq. \ref{OT_c} to Eq. \ref{score}:
\begin{equation}
    \mathbb{S}(\mathcal{S}) = \min_{\boldsymbol{\pi}\in \Pi_{\mathcal{S}}}\langle\boldsymbol{\pi}, \boldsymbol{\mathrm{D}}^*-\lambda \boldsymbol{g}\cdot\boldsymbol{1}^T\rangle_F=OT_{\boldsymbol{\mathrm{M}}}(\mu_{\mathcal{S}}, \mu_\mathcal{V}). \label{score_ot}
\end{equation}
Here, $\boldsymbol{\mathrm{D}}^*=(d^*(\boldsymbol{z}_i, \boldsymbol{z}_j'))_{ij}\in\mathbb{R}^{|\mathcal{T}|\times|\mathcal{V}|}$ is the distance matrix (defined in Theorem \ref{lemma:1}), and $\boldsymbol{g}=(\|\nabla_{\phi}\mathcal{L}_{\phi^0}(\boldsymbol{z_i})\|)_i\in \mathbb{R}^{|\mathcal{T}|}$. In addition, $\Pi_\mathcal{S}:=\{\boldsymbol{\pi}\in \mathbb{R}^{|\mathcal{T}|\times |\mathcal{V}|}:\sum_{i}\pi_{ij}=\frac{1}{|\mathcal{V}|}, \sum_j\pi_{ij}=\frac{1}{|\mathcal{S}|} \mathbb{I}(\boldsymbol{z}_i\in \mathcal{S}), \pi_{ij}\ge 0\}$ is the coupling space, where $\mathbb{I}(\cdot)$ is the indicator function. Finally, the \textbf{POO cost matrix} $\boldsymbol{\mathrm{M}}$ for $OT_{\boldsymbol{\mathrm{M}}}$ is defined as
\begin{equation}
    \label{poo_matrix}
    \boldsymbol{\mathrm{M}}=\boldsymbol{\mathrm{D}}^*-\lambda \boldsymbol{g}\cdot\boldsymbol{1}^T=(D^*_{ij}-\lambda g_i)_{ij}\in\mathbb{R}^{|\mathcal{T}|\times|\mathcal{V}|}.
\end{equation}

Then, the proxy optimization objective can be reformulated into:
\begin{equation}
\label{opt:2}
    \mathcal{S}^*=\underset{{\mathcal{S}\subset\mathcal{T}, |\mathcal{S}|=n}}{\text{argmin}} \left(OT_{\boldsymbol{\mathrm{M}}}(\mu_{\mathcal{S}}, \mu_\mathcal{V})=\min_{\boldsymbol{\pi}\in \Pi_{\mathcal{S}}}\langle\boldsymbol{\pi}, \boldsymbol{\mathrm{M}}\rangle_F\right).
\end{equation} 

\subsubsection{{Relaxation and greedy search for initial solution}} 
\label{sec:itra2}
The bi-level structure of the optimization problem Eq. \ref{opt:2} poses a significant challenge, as the inner OT problem under constraint $\Pi_{\mathcal{S}}$ lacks a closed-form solution. However, we note that, by slightly relaxing the constraint space from $\Pi_{\mathcal{S}}$ to $\Omega_{\mathcal{S}}:=\{\boldsymbol{\pi}\in \mathbb{R}^{|\mathcal{T}|\times |\mathcal{V}|}:\pi_{ij}\ge 0, \sum_{i}\pi_{ij}=\frac{1}{|\mathcal{V}|},  \sum_j\pi_{ij}=0, \forall i\not \in \mathcal{S}\}$, the inner optimization over $\Omega_{\mathcal{S}}$ admits a closed-form solution: $\min_{\boldsymbol{\pi}\in \Omega_{\mathcal{S}}}\langle \boldsymbol{\pi}, \boldsymbol{\mathrm{M}}\rangle_F=\frac{1}{|\mathcal{V}|}\sum_{j=1}^{|\mathcal{V}|}\min_{\boldsymbol{z}_i\in \mathcal{S}}M_{ij}$.
Consequently, replacing $\Pi_{\mathcal{S}}$ with $\Omega_{\mathcal{S}}$ in Eq. \ref{opt:2} simplifies the bi-level optimization into the following p-median problem \cite{p_median}:
\begin{equation}
\label{p-median}
    \min_{\mathcal{S}\subset\mathcal{T}, |\mathcal{S}|=n} \left(\min_{\boldsymbol{\pi}\in \Omega_{\mathcal{S}}}\langle\boldsymbol{\pi}, \boldsymbol{\mathrm{M}}\rangle_F\right)\longleftrightarrow\min_{\mathcal{S}\subset\mathcal{T}, |\mathcal{S}|=n} \sum_{j=1}^{|\mathcal{V}|}\min_{\boldsymbol{z}_i\in \mathcal{S}} M_{ij},
\end{equation}
which enables a \textbf{greedy algorithm} \cite{p_median_greedy} to approximate the optimum solution of the problem Eq. \ref{p-median}. The greedy algorithm starts with an empty set $\mathcal{S}=\varnothing$ and keeps on adding data $\boldsymbol{z}\in \mathcal{T}\backslash \mathcal{S}$ to $\mathcal{S}$ that \textbf{minimizes the marginal gain}:
\begin{equation}
    \small
    \label{greedy}
    \text{Gain}_{\boldsymbol{\mathrm{M}}}(\boldsymbol{z}|\mathcal{S})=\sum_{j=1}^{|\mathcal{V}|}\min_{\boldsymbol{z}_i\in\mathcal{S}\cup \{\boldsymbol{z}\}} M_{ij}-\sum_{j=1}^{|\mathcal{V}|}\min_{\boldsymbol{z}_i\in\mathcal{S}} M_{ij}=\sum_{j=1}^{|\mathcal{V}|}(M_{\boldsymbol{z}j}-M_{*j})^-,
\end{equation}
where $x^-$ denotes $\min(x, 0)$, and $M_{*j}=\min_{\boldsymbol{z}_i\in\mathcal{S}} (M_{ij})$ needs to be computed only once per iteration. The solution obtained by this greedy algorithm is denoted by $\mathcal{S}_{\text{gre}}$. As an approximation of the optimal solution for the slightly relaxed problem Eq. \ref{p-median}, $\mathcal{S}_{\text{gre}}$ effectively minimizes a lower bound for the original optimization problem Eq. \ref{opt:2}, thus providing a strong insight for employing $\mathcal{S}_{\text{gre}}$ as an initial solution, which is further validated by our experimental results in Section \ref{sec:efficiency}.
\subsubsection{{Refinement via exchanges with pruning}} 
\label{sec:itra3}
To improve $\mathcal{S}_{\text{gre}}$, we employ an \textbf{exchange-based refinement} that repeatedly swaps elements between $\mathcal{S}$ and $\mathcal{T}\backslash\mathcal{S}$ \emph{whenever the swap leads to a decrease in $\mathbb{S}(\mathcal{S})$}. While adopted by combinatorial optimization \cite{exchange_1, exchange_2, exchange_3}, an exhaustive search requires at most $|\mathcal{S}|\times(|\mathcal{T}|-|\mathcal{S}|)$ OT distance calculations to identify beneficial exchanges, causing unaffordable cost for large recommendation datasets. Thus, we propose a pruning strategy that estimates the {\textbf{marginal improvement (MI)}} to identify potential exchanges, which is defined as:
\begin{equation}
    \text{MI}_{\boldsymbol{\mathrm{M}}}(\boldsymbol{z}|\mathcal{S}):=
    \begin{cases}
        \mathbb{S}(\mathcal{S}\cup\{\boldsymbol{z}\})-\mathbb{S}(\mathcal{S})\,\,\text{if}\,\,\boldsymbol{z}\not \in \mathcal{S}, \\
        \mathbb{S}(\mathcal{S})-\mathbb{S}(\mathcal{S}-\{\boldsymbol{z}\})\,\,\text{if}\,\,\boldsymbol{z}\in \mathcal{S}.
    \end{cases}
\end{equation}
Then, we use the dual of the OT distance and leverage its stability under small perturbations \cite{lava} to prove the following theorem.
\begin{theorem}
\label{theo:mi}
    $\mathrm{MI}$ score can be efficiently estimated as:
    \begin{align}
        \nonumber
        \mathrm{MI}_{\boldsymbol{\mathrm{M}}}(\boldsymbol{z}|\mathcal{S}) &\approx \sup_{y\in \mathbb{R}} F_{\boldsymbol{\mathrm{M}}}(y|\boldsymbol{z}, \mathcal{S}) \\
        F_{\boldsymbol{\mathrm{M}}}(y|\boldsymbol{z}, \mathcal{S}) &:= \frac{1}{|\mathcal{S}|}y+\frac{1}{|\mathcal{V}|}\sum_{j}(M_{\boldsymbol{z}j}-f^{\boldsymbol{\mathrm{M}}}_{\boldsymbol{z}j}(\boldsymbol{u}^*)-y)^-,\label{AI}
    \end{align}
    where $\boldsymbol{u}^*\in\mathbb{R}^{|\mathcal{T}|}$ denotes the optimal dual variables of $OT_{\boldsymbol{\mathrm{M}}}(\mu_{\mathcal{S}}, \mu_\mathcal{V})$ associated with $\mu_{\mathcal{S}}$ (defined in Section \ref{basic_ot}), and we define $f^{\boldsymbol{\mathrm{M}}}_{\boldsymbol{z}j}(\boldsymbol{u}^*)=\min_{\boldsymbol{z}_i\in \mathcal{S}:\boldsymbol{z}_i\not=\boldsymbol{z}} (M_{ij}-u^*_i).$
\end{theorem}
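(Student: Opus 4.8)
The plan is to exploit the Kantorovich--Rubinstein dual of $OT_{\boldsymbol{\mathrm{M}}}$ together with the stability of its optimal dual potentials under a single-point perturbation of the source measure, in the spirit of the LAVA analysis \cite{lava}. I treat the addition case $\boldsymbol{z}\not\in\mathcal{S}$ in detail; the removal case $\boldsymbol{z}\in\mathcal{S}$ then follows by symmetry.

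First I would record the dual of $\mathbb{S}(\mathcal{S})=OT_{\boldsymbol{\mathrm{M}}}(\mu_{\mathcal{S}}, \mu_\mathcal{V})$. Since $\mu_{\mathcal{S}}$ and $\mu_\mathcal{V}$ are uniform on $\mathcal{S}$ and $\mathcal{V}$, the dual form in Section \ref{basic_ot} gives
$$\mathbb{S}(\mathcal{S}) = \sup_{\boldsymbol{u}}\left\{\frac{1}{|\mathcal{S}|}\sum_{\boldsymbol{z}_i\in\mathcal{S}}u_i+\frac{1}{|\mathcal{V}|}\sum_j v_j(\boldsymbol{u})\right\},\qquad v_j(\boldsymbol{u})=\min_{\boldsymbol{z}_i\in\mathcal{S}}(M_{ij}-u_i),$$
where the second potential has been maximized out through the constraint $u_i+v_j\le M_{ij}$. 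Writing $\boldsymbol{u}^*$ for the optimizer and $v_j^*=v_j(\boldsymbol{u}^*)$, I note that when $\boldsymbol{z}\not\in\mathcal{S}$ we have exactly $v_j^*=f^{\boldsymbol{\mathrm{M}}}_{\boldsymbol{z}j}(\boldsymbol{u}^*)$, because the defining minimum in $f^{\boldsymbol{\mathrm{M}}}_{\boldsymbol{z}j}$ then ranges over all of $\mathcal{S}$.

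The key approximation step is the following. For $\mathcal{S}'=\mathcal{S}\cup\{\boldsymbol{z}\}$, inserting $\boldsymbol{z}$ shifts the source mass on each existing atom by $O(1/|\mathcal{S}|)$; by the Lipschitz stability of OT dual solutions under such small perturbations \cite{lava}, the optimal potentials on the old support stay $\approx\boldsymbol{u}^*$, so I fix them and optimize only over the new potential $y:=u_{\boldsymbol{z}}$. Substituting into the dual for $\mathcal{S}'$ and applying the elementary identity $\min(a,b)=a+(b-a)^-$ to
$$\min_{\boldsymbol{z}_i\in\mathcal{S}'}(M_{ij}-u_i^*)=\min\!\left(f^{\boldsymbol{\mathrm{M}}}_{\boldsymbol{z}j}(\boldsymbol{u}^*),\,M_{\boldsymbol{z}j}-y\right)=f^{\boldsymbol{\mathrm{M}}}_{\boldsymbol{z}j}(\boldsymbol{u}^*)+(M_{\boldsymbol{z}j}-f^{\boldsymbol{\mathrm{M}}}_{\boldsymbol{z}j}(\boldsymbol{u}^*)-y)^-$$
cleanly separates the unperturbed potential from the correction induced by the new atom $\boldsymbol{z}$.

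Finally I would form the difference $\mathrm{MI}_{\boldsymbol{\mathrm{M}}}(\boldsymbol{z}|\mathcal{S})=\mathbb{S}(\mathcal{S}')-\mathbb{S}(\mathcal{S})$. The term $\frac{1}{|\mathcal{V}|}\sum_j f^{\boldsymbol{\mathrm{M}}}_{\boldsymbol{z}j}(\boldsymbol{u}^*)$ equals $\frac{1}{|\mathcal{V}|}\sum_j v_j^*$ and cancels against $\mathbb{S}(\mathcal{S})$; the first-argument sum $\sum_{\boldsymbol{z}_i\in\mathcal{S}}u_i^*$ survives only through the normalization gap $\frac{1}{|\mathcal{S}|+1}-\frac{1}{|\mathcal{S}|}=O(1/|\mathcal{S}|^2)$ and is dropped; and $\frac{1}{|\mathcal{S}'|}y$ is replaced by $\frac{1}{|\mathcal{S}|}y$ up to the same order. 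What remains is precisely $\sup_y F_{\boldsymbol{\mathrm{M}}}(y|\boldsymbol{z},\mathcal{S})$. For $\boldsymbol{z}\in\mathcal{S}$, the exclusion $\boldsymbol{z}_i\neq\boldsymbol{z}$ in $f^{\boldsymbol{\mathrm{M}}}_{\boldsymbol{z}j}$ makes it the dual potential of $\mathcal{S}-\{\boldsymbol{z}\}$, and the identical expansion yields the claim with the roles of insertion and deletion swapped. The main obstacle is rigorously controlling the two approximations behind the ``$\approx$'': the stability of $\boldsymbol{u}^*$ under the single-point perturbation, which rests on the cited dual-stability estimate and on $|\mathcal{S}|$ being large, and the dropped $O(1/|\mathcal{S}|)$ normalization term. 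Making the total error explicit would require a quantitative perturbation bound for discrete OT potentials, which I expect to be the delicate part of a fully rigorous argument.
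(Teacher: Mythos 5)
Your proposal is correct and follows essentially the same route as the paper's proof: both pass to the Kantorovich dual of $OT_{\boldsymbol{\mathrm{M}}}$, freeze the optimal potentials on the unperturbed support via a dual-stability/sensitivity argument, optimize only over the new atom's potential $y$, and use an elementary rewriting of $\min(a,b)$ to isolate the $(\cdot)^-$ correction term (the paper reaches the supremum over $y$ by first evaluating at $u^*_{\boldsymbol{z}}$ and then noting any other $t$ gives a lower bound, which is equivalent to your restricted maximization). The approximations you flag as delicate — dual stability under a one-point perturbation and the dropped $O(1/|\mathcal{S}|)$ normalization terms — are exactly the ones the paper also leaves at the heuristic level, citing the Sensitivity Theorem.
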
 
Note that $F(y|\boldsymbol{z}, \mathcal{S})$ is piecewise linear with knots $M_{\boldsymbol{z}j}-f^{\boldsymbol{\mathrm{M}}}_{\boldsymbol{z}j}(\boldsymbol{u}^*), 1\le j\le |\mathcal{V}|$. Based on this fact, the optimal $\hat{y}_{\boldsymbol{z}}$ that maximizes $F_{\boldsymbol{\mathrm{M}}}(y|\boldsymbol{z}, \mathcal{S})$ equals to the $R$-th largest knot, where $R = \lceil |\mathcal{V}|/|\mathcal{S}|\rceil$. For efficient computation, we calculate $\boldsymbol{u}^*$ and $f^{\boldsymbol{\mathrm{M}}}_{\boldsymbol{z}j}(\boldsymbol{u}^*)$ once per iteration. For each candidate $\boldsymbol{z}$, we: 1) find the $R$-th largest value among $M_{\boldsymbol{z}j}-f^{\boldsymbol{\mathrm{M}}}_{\boldsymbol{z}j}(\boldsymbol{u}^*)$ as $\hat{y}_{\boldsymbol{z}}$, and 2) obtain MI$_{\boldsymbol{\mathrm{M}}}(\boldsymbol{z}|\mathcal{S})\approx F_{\boldsymbol{\mathrm{M}}}(\hat{y}_{\boldsymbol{z}}|\boldsymbol{z}, \mathcal{S})$. This process naturally supports parallel computation across candidates. The estimator MI$_{\boldsymbol{\mathrm{M}}}(\boldsymbol{z}|\mathcal{S}$) enables two efficient pruning strategies:

    \noindent 1. \emph{Outer pruning}: For the samples in $\mathcal{T}\backslash\mathcal{S}$, we rank them by $\text{MI}_{\boldsymbol{\mathrm{M}}}(\cdot|\mathcal{S})$ in ascending order and retain only top-$k$ candidates, as they have the highest potential for reducing $\mathbb{S}(\mathcal{S})$.
    
    \noindent 2. \emph{Inner pruning}: For a sample $\boldsymbol{z}\in\mathcal{S}$, a higher MI$_{\boldsymbol{\mathrm{M}}}(\boldsymbol{z}|\mathcal{S}$) indicates a greater potential reduction when removing $\boldsymbol{z}$, so we rank samples in descending order and select top-$k$ candidates.
    
By efficiently estimating MI scores and applying two pruning strategies, we greatly reduce the number of OT distance computations by \emph{only verifying the top-$k$ most promising candidate exchanges to search for a decrease in $\mathbb{S}(\mathcal{S})$}. If none of these candidates reduce $\mathbb{S}(\mathcal{S})$, the refinement terminates early.

\subsubsection{{Efficient OT computation}}\label{eff_ot_comp}
The value of $OT_{\boldsymbol{\mathrm{M}}}(\mu_{\mathcal{S}}, \mu_{\mathcal{V}})$ and the associated optimal dual variables $\boldsymbol{u}^*$ can be efficiently computed using Python's POT \cite{pot}. Notably, since $\mu_\mathcal{S}$ is supported only on $\mathcal{S}$, computation using the sub-matrix of $\boldsymbol{s}$ whose rows are indexed by $\mathcal{S}$ yields the equal OT value and the same optimal dual variables associated with $\mathcal{S}$. The dual variables on $\mathcal{T}\backslash\mathcal{S}$ are redundant and set to zero following \cite{lava}, thus significantly reducing the computational complexity from $\mathcal{|T|}\times \mathcal{|V|}$ to $|\mathcal{S}|\times |{\mathcal{V}}|$. The full procedure of our framework is detailed in Algorithm \ref{alg:1} in the Appendix \ref{app:alg}.

\subsubsection{Discussion} Algorithmically, GORACS achieves group-level selection by introducing the nonlinear OT distance in POO (Eq. \ref{score}) to capture inter-sample relationships. Although this design increases algorithmic complexity compared to individual-level methods, recommendation data naturally involve complex user-item interactions that form latent group connections within the data, making our OT-based group-level coreset selection framework particularly effective. Our ablation experiments in Section \ref{exp:ablation} confirm that the OT term is essential for capturing these structures and improving recommendation performance, clearly distinguishing our method from individual-level approaches in recommendation tasks.

\subsection{Label-enhanced Selection for Discriminative Recommendation}\label{sec:label}
We further enhance subset quality in classification tasks (e.g., discriminative recommendation) by incorporating label information into the subset selection process. The key insight is that in the classification task any joint distribution $\mathbb{Q}(\boldsymbol{x}, \boldsymbol{y})$ can be expressed as a weighted sum of class-conditional distributions $\mathbb{Q}(\boldsymbol{x}, \boldsymbol{y})=\sum_{k=1}^{K} q_k\cdot \mathbb{Q}_k(\boldsymbol{x})$, where $q_k$ is the class probability and $\mathbb{Q}_k$ is the conditional distribution for class $k$. We next show that this decomposition enables fine-grained selection for each class.

Let $\mathcal{V}_k$ denote the subset of validation samples with label $\boldsymbol{y}_k$, and $p_k=|\mathcal{V}_k|/|\mathcal{V}|$ be the class proportion. Then, for any subset $\mathcal{S}=\cup_{k=1}^K\mathcal{S}_k$ where $\mathcal{S}_k$ contains samples labeled $\boldsymbol{y}_k$ and satisfies $|\mathcal{S}_k|/|\mathcal{S}|=p_k$, we derive the following bound based on Theorem \ref{lemma:1} and Theorem \ref{theo:grad}:
\begin{equation}
\small
\nonumber
    \begin{aligned}
        &\mathbb{E}_{\mathbb{P}}[\mathcal{L}_{\phi_{\mathcal{S}}^*}(\boldsymbol{x}, \boldsymbol{y})]=\sum_{k=1}^K p_k\left(\mathbb{E}_{\mathbb{P_k}}[\mathcal{L}_{\phi_{\mathcal{S}}^*}^k(\boldsymbol{x})]- \mathbb{E}_{\mu_{\mathcal{S}_k}}[\mathcal{L}_{\phi_{\mathcal{S}}^*}^k(\boldsymbol{x})]\right) +\\
         & \mathbb{E}_{\mu_{\mathcal{S}}}[\mathcal{L}_{\phi_{\mathcal{S}}^*}(\boldsymbol{x}, \boldsymbol{y})]\le L\cdot \sum_{k=1}^K p_k \left(OT_{\boldsymbol{\mathrm{D}}^*}(\mu_{\mathcal{S}_k}, \mu_{\mathcal{V}_k})-\frac{\lambda}{|\mathcal{S}_k|} \sum_{\boldsymbol{z}\in \mathcal{S}_k} g_{\boldsymbol{z}}\right) + \Lambda,
    \end{aligned}
\end{equation}
where $\mathcal{L}^k(\boldsymbol{x})=\mathcal{L}(\boldsymbol{x}, \boldsymbol{y}_k)$, and $L, \lambda, \Lambda$ are constants. Unlike Eq. \ref{score}, this bound explicitly incorporates class-specific information, making it suitable for discriminative recommendations. Using the established Algorithm \ref{alg:1}, the bound can be optimized by independently minimizing $\mathbb{S}(\mathcal{S}_k, \mathcal{V}_k)=OT_{\boldsymbol{\mathrm{D}}^*}(\mu_{\mathcal{S}_k}, \mu_{\mathcal{V}_k})-\frac{\lambda}{|\mathcal{S}_k|}\cdot \sum_{\boldsymbol{z}\in \mathcal{S}_k} g_{\boldsymbol{z}}$ for each class under constraint $|\mathcal{S}_k|=p_k|\mathcal{S}|$, as detailed in Algorithm \ref{alg:2} in the Appendix \ref{app:alg}. Our experiments confirm that this label-aware approach significantly improves discriminative recommendations.

\section{EXPERIMENTS}\label{sec:exp}

\begin{table*}[t]  
    \centering  
    \small
    \caption{Overall performance comparison for SeqRec task. The best scores are highlighted in bold, while the second-best scores are \underline{underlined}. $\boldsymbol{\Delta\%}$ denotes the relative improvement percentage of our GORACS over the second-best competitors. }
    \label{combined_results}  
    \begin{tabular}{|l|ccccc|ccccc|ccccc|}  
        \Xhline{1.2pt}
        \multirow{2}{*}{\textbf{Methods}} & \multicolumn{5}{c|}{\textbf{Games}} & \multicolumn{5}{c|}{\textbf{Food}} & \multicolumn{5}{c|}{\textbf{Movies}} \\
        \cline{2-6} \cline{7-11} \cline{12-16}  
        & \textbf{TL}$\boldsymbol{\downarrow}$&\textbf{N@5} & \textbf{N@10} & \textbf{HR@5} & \textbf{HR@10} & \textbf{TL}$\boldsymbol{\downarrow}$&\textbf{N@5} & \textbf{N@10} & \textbf{HR@5} & \textbf{HR@10} &\textbf{TL}$\boldsymbol{\downarrow}$&\textbf{N@5} & \textbf{N@10} & \textbf{HR@5} & \textbf{HR@10} \\
       \Xhline{1.2pt}
        \textbf{Random} &0.8217 &0.1798 &0.2074& 0.2373 &0.3219& 0.8114 &0.0845 &0.1002 &\underline{0.1167} &0.1658 &0.8674 &\underline{0.1295} &0.1512&\underline{0.1717}&0.2392 \\
        \textbf{DSIR} & 0.8367& 0.1233 & 0.1494 & 0.1752 & 0.2572 & 0.9841&0.0705 & 0.0881 & 0.0997 & 0.1540 &1.1580 &0.0906 & 0.1137 & 0.1280 & 0.2002 \\
        \textbf{CCS} & 0.8467&\underline{0.1801} & \underline{0.2081} & \underline{0.2398} & \underline{0.3230} &0.8335 &0.0781 & 0.0944 & 0.1097 & 0.1607 & 0.9498&0.1285 & 0.1496 & 0.1708 & 0.2386 \\
        \textbf{D2} &0.8650 &0.1624 & 0.1888 & 0.2204 & 0.3020 &0.8140& 0.0720 & 0.0892 & 0.1057 & 0.1600 &0.9169& 0.1084 & 0.1321 & 0.1558 & 0.2296 \\
        \textbf{GraNd}&0.9815& 0.1546 & 0.1801 & 0.2020 & 0.2814 &1.0181& 0.0777 & 0.0959 & 0.1118 & \underline{0.1693} &1.2360& 0.0988 & 0.1226 & 0.1404 & 0.2152 \\ 
        \textbf{EL2N} &0.8367& 0.1182 & 0.1445 & 0.1632 & 0.2444 &1.0197& 0.0658 & 0.0824 & 0.0963 & 0.1478 &1.2380& 0.0837 & 0.1043 & 0.1214 & 0.1860 \\
        \textbf{DEALRec} &\underline{0.8214}& 0.1777 & 0.2046 & 0.2372 & 0.3208 &\underline{0.7923}& \underline{0.0851} & \underline{0.1016} & 0.1148 & 0.1665 &\underline{0.8443}& 0.1290 & \underline{0.1517} & 0.1706 & \underline{0.2414} \\
        \rowcolor{lightblue}  
        \textbf{GORACS} &\textbf{0.7650}& \textbf{0.1924} & \textbf{0.2195} & \textbf{0.2586} & \textbf{0.3404} &\textbf{0.7337}& \textbf{0.0910} & \textbf{0.1075} & \textbf{0.1236} & \textbf{0.1783} &\textbf{0.7643}& \textbf{0.1360} & \textbf{0.1610} & \textbf{0.1790} & \textbf{0.2568} \\
        \rowcolor{lightblue}  
         $\boldsymbol{\Delta\%}$ &\textbf{\textit{-6.87\%}} &\textbf{\textit{6.83\%}} & \textbf{\textit{5.48\%}} & \textbf{\textit{7.84\%}} & \textbf{\textit{5.39\%}} &\textbf{\textit{-7.34\%}}& \textbf{\textit{6.93\%}} & \textbf{\textit{5.81\%}} & \textbf{\textit{5.91\%}} & \textbf{\textit{5.32\%}} &\textbf{\textit{-9.48\%}}& \textbf{\textit{5.02\%}} & \textbf{\textit{6.13\%}} & \textbf{\textit{4.25\%}} & \textbf{\textit{6.38\%}} \\
        \Xhline{1.2pt} 
    \end{tabular}  
\end{table*}  

\begin{table}[t!]  
\centering  
\small  
\caption{Overall performance for CTRPre task. }  
\label{auc_testloss_results}  
\begin{tabular}{|l|cc|cc|cc|}  
\Xhline{1.2pt}  
\multirow{2}{*}{\textbf{Methods}} & \multicolumn{2}{c|}{\textbf{Games}} & \multicolumn{2}{c|}{\textbf{Food}} & \multicolumn{2}{c|}{\textbf{Movies}} \\
\cline{2-7}  
& \textbf{AUC}$\boldsymbol{\uparrow}$ & \textbf{TL}$\boldsymbol{\downarrow}$ & \textbf{AUC}$\boldsymbol{\uparrow}$ & \textbf{TL}$\boldsymbol{\downarrow}$  &   
 \textbf{AUC}$\boldsymbol{\uparrow}$ & \textbf{TL}$\boldsymbol{\downarrow}$ \\
\Xhline{1.2pt}   
\textbf{Random} & 0.5933 & 0.4903 & 0.5986 & 0.4837 & 0.6590 & \underline{0.4089} \\  
\textbf{DSIR} & 0.6278 & 0.4786 & 0.5664 & 0.5011 & 0.6565 & 0.4491 \\ 
\textbf{CCS} & 0.6381 & 0.4783 & \underline{0.6170} & 0.4864 & 0.6442 & 0.4868 \\  
\textbf{D2} & 0.6072 & 0.4915 & 0.5885 & 0.4874 & 0.5598 & 0.4769 \\  
\textbf{GraNd} & 0.4671 & 0.9954 & 0.4642 & 0.8897 & 0.4721 & 0.8663 \\  
\textbf{EL2N} & 0.4654 & 0.9953 & 0.4643 & 0.8907 & 0.4498 & 1.0032 \\   
\textbf{MODERATE} & 0.5385 & 0.5030 & 0.5533 & 0.4843 & \underline{0.6624}&0.4201\\ 
\textbf{FDMat} &\underline{0.6552} & \underline{0.4765} & 0.6099 &\underline{0.4836} & 0.6339& 0.4139\\
\rowcolor{lightblue}   
\textbf{GORACS} & \textbf{0.6949} & \textbf{0.4563} & \textbf{0.6306} & \textbf{0.4713} & \textbf{0.6944} & \textbf{0.3945} \\
\rowcolor{lightblue}   
$\Delta\%$ & \textbf{\textit{6.06\%}} & \textbf{\textit{-4.24\%}} & \textbf{\textit{2.20\%}} & \textbf{\textit{-2.54\%}} & \textbf{\textit{4.83\%}} & \textbf{\textit{-3.52\%}} \\  
\Xhline{1.2pt}  
\end{tabular}
\end{table}

\subsection{Experimental Settings}
\subsubsection{Dataset description} We conduct our experiments upon three widely used real-world datasets: Amazon \textbf{Games}, \textbf{Food} and \textbf{Movies}, all from the Amazon review datasets\footnote{\url{https://cseweb.ucsd.edu/~jmcauley/datasets/amazon_v2/}} which provide abundant user reviews and metadata. Table \ref{tab:datasets} summarizes the statistics of these datasets. We keep 5-core data for all datasets following \cite{ALLM, llamarec}, and sort user-item interactions chronologically to form interaction sequences. Each sequence contains a user's several consecutive historical item interactions as input and one subsequent item as output. We use the timestamp of the output item as the timestamp of the sequence. These sequences are then split chronologically into training, validation, and test sets to ensure no data leakage \cite{leakage}. Given the limitations in the inference speed of LLMs, we employ 8:1:1 split for the smaller Food dataset, while for larger Movies and Games we follow \cite{bao2023bistepgroundingparadigmlarge} and use the last 5,000 chronologically ordered sequences for test and the preceding 5,000 for validation.

\subsubsection{Tasks}
We evaluate GORACS on two key tasks in LLMRecs. 

     \noindent\textbf{1. Generative Sequential Recommendation (SeqRec):} This generative task requires LLMs to produce the next interacted item given a user's historical interaction sequence \cite{dealrec}. We adopt the competitive {\emph{BIGRec}} \cite{bao2023bistepgroundingparadigmlarge} as the backbone for its effectiveness and wide use in generative LLM-based  recommendation \cite{popularity_1, dealrec}. BIGRec represents items by generating item titles, and utilizes a $L^2$ embedding distance-based grounding paradigm to match generated item titles with the real item titles, thus ensuring accurate ranking.
     
     \noindent\textbf{2. CTR Prediction (CTRPre)}: This discriminative recommendation task classifies (predicts) target user's interaction as either ``like'' or ``dislike'' \cite{ALLM}, which has been extensively studied due to its effectiveness on shaping user decisions and improving personalized experiences \cite{ctr1, ctr2}. For this task, we adopt the representative {\emph{TALLRec}} \cite{bao2023tallrec} as the backbone, which predicts the target user's preference by outputting a binary label ``Yes'' or ``No'', based on the user's historical interacted items. Each item is represented by its title and labeled as ``like'' if the user's rating on it is greater than 3.

\vspace{-0.1cm}
\subsubsection{Baselines} {We compare GORACS with the following baselines of coreset selection. \textbf{Random} selects samples uniformly, which is a popular and strong baseline in coreset selection research \cite{deepcore}. 

{\emph{Distribution-based methods}}: \textbf{DSIR} \cite{DSIR} selects samples by aligning the n-gram frequencies of the selected coreset and the target distribution via importance resampling. \textbf{CCS} \cite{CCS} adopts an importance metric (we use EL2N following \cite{dealrec}) for stratified sampling to enhance data coverage in the coreset, which is competitive for low selection budgets. \textbf{D2 Pruning} \cite{D2} constructs graphs to update data scores and selects samples from diverse regions. 

{\emph{Importance-based methods}}: \textbf{GraNd} \cite{EL2N} selects important samples with higher gradient norms at early training stages. \textbf{EL2N} \cite{EL2N} selects the important samples whose prediction results are more different from the ground truth. \textbf{DEALRec} \cite{dealrec} is the state-of-the-art (SOTA) method designed for fine-tuning LLMRecs that identifies and selects influential samples by considering samples' influence scores and effort scores. Notably, DEALRec requires a small surrogate sequential recommendation model to compute influence scores, so we only compare it in the SeqRec task.} 

\begin{figure}[t]  
    \vspace{-0.3cm}
    \centering  
    \includegraphics[width=0.95\linewidth]{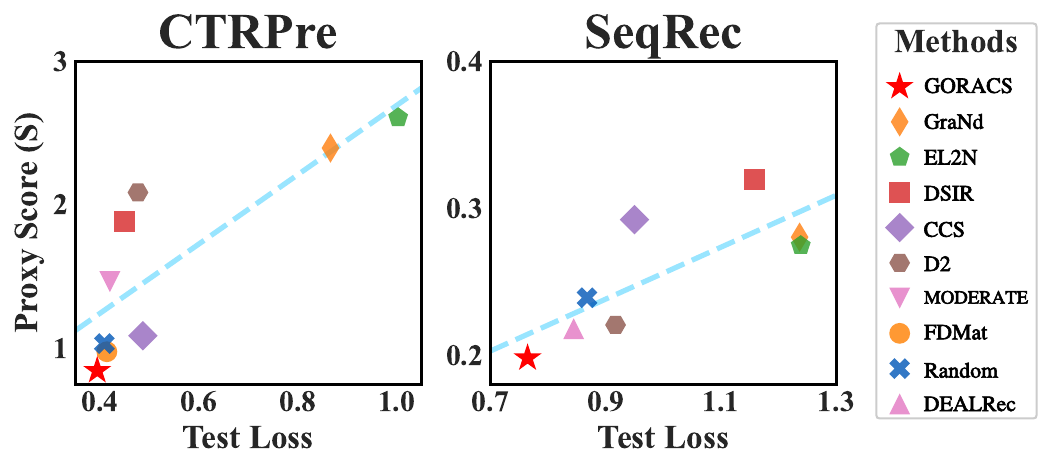} 
    \caption{Scatter Plots of Test Loss vs. Proxy Score on Movies when setting $\lambda$ of $\mathbb{S}$ to $0.1$ and $0.5$ respectively. The trend lines are derived from OLS regression analysis. }\label{fig:overall}  
    \Description{Scatter Plots of Test Loss vs. Proxy Score on Movies Dataset, which shows a linear relationship between the two terms. The $\lambda$ of $\mathbb{S}$ is set $0.1$ and $0.5$ respectively.}  
\end{figure}

For CTRPre task, we further add \textbf{MODERATE} \cite{moderate}, which selects samples at median distance from class center, and \textbf{FDMat} \cite{fdmat}, a class-aware method that uses optimal transport to select a coreset whose distribution matches the target distribution in the feature embedding space. See Appendix \ref{app:detail} for the implementation details.

\subsubsection{Evaluation metrics} For SeqRec task, we report the widely used metrics \textbf{HitRatio@$k$ (HR@$k$)} and \textbf{NDCG@$k$ (N@$k$)} \cite{bao2023bistepgroundingparadigmlarge, DROS}, where $k$ is set to $5/10$. Following \cite{ALLM, LLMRec} we randomly sample 99 items that the user has not previously interacted with as negative samples. For CTRPre task, we employ the representative \textbf{AUC} \cite{bao2023tallrec, auc1, auc2}. Moreover, we calculate \textbf{Test Loss (TL)} for both tasks to comprehensively evaluate fine-tuning performance. 

\subsection{Overall Performance}

\begin{figure}[t]  
        \centering  
        \includegraphics[width=1.02\linewidth]{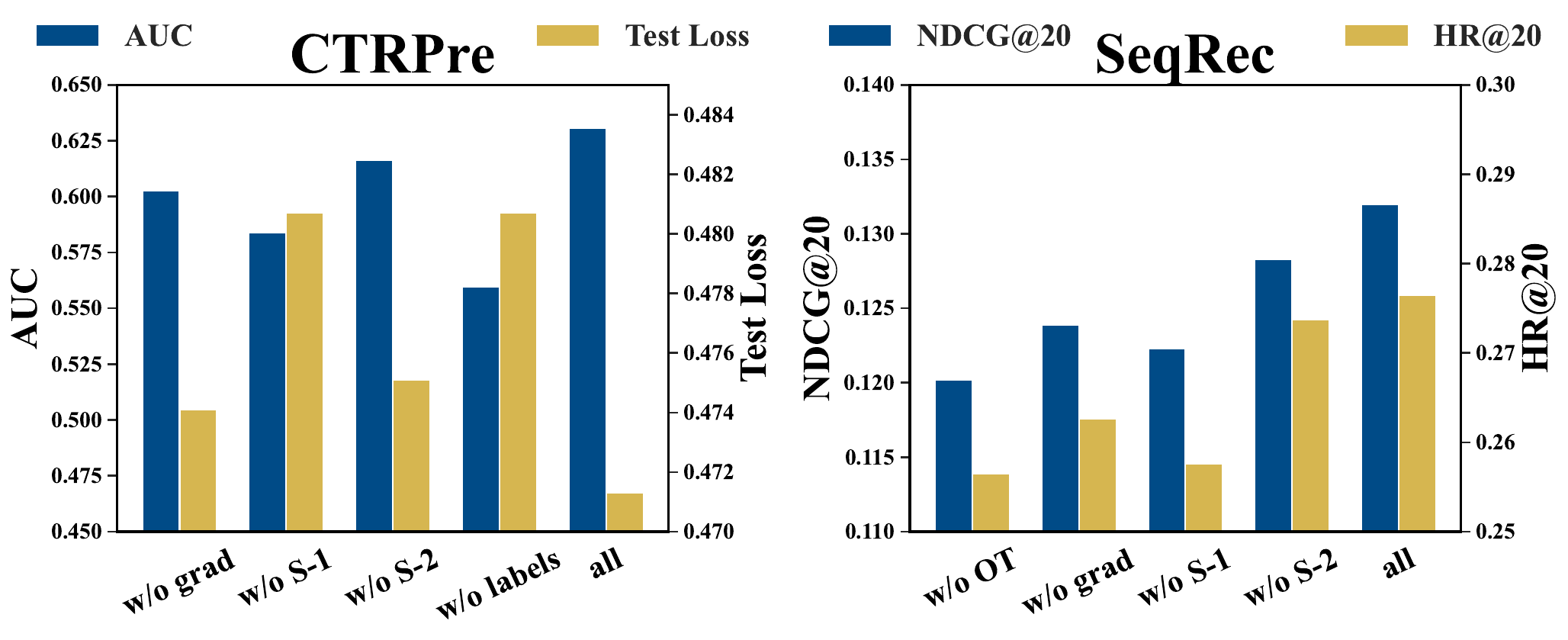} 
        \caption{Ablation studies of each component's contribution to the overall performance on Food. The “w/o OT” results on CTRPre (0.4721 for AUC and 0.8697 for Test Loss) were removed to improve figure presentation.}\label{fig:ablations}  
        \Description{Ablations that evaluate each component's contribution to the overall performance.} 
    \end{figure}

The performance scores of the baselines and GORACS on SeqRec and CTRPre task are presented in Table \ref{combined_results} and Table \ref{auc_testloss_results} respectively, from which we have the following observations and analysis.

1. Our proposed GORACS consistently outperforms all baselines for both SeqRec and CTRPre tasks on all datasets, justifying its robust generalization ability. Notably, GORACS consistently achieves the lowest Test Loss, highlighting its superior ability to imporve fine-tuning data by bridging the gap between coreset selection and downstream fine-tuning objectives. In contrast, while some methods (e.g., CCS, DEALRec, FDMat) achieve competitive results on certain datasets, none exhibits consistently strong performance across all settings. This inconsistency arises because the selection criteria of these methods do not directly align with the final fine-tuning objective, fundamentally limiting their generalization and stability compared to our approach.
    
2. All baselines exhibit notable performance disparities. Specifically, we observe that distribution-based methods like CCS and D2 generally outperform importance-based methods such as GraNd and EL2N. This deficiency arises since GraNd and EL2N prioritize difficult samples with high individual information, neglecting the essential role of other samples and resulting in a biased training subset \cite{CCS}. In contrast, CCS and D2 ensure balanced coverage of selected samples by collectively considering the overall diversity, demonstrating the effectiveness of group-level coreset selection.

 3. Although DEALRec achieves near-top NDCG@10 on Movies, its selection objective does not align directly with the fine-tuning loss, resulting in suboptimal performance. Additionally, DEALRec uses a heuristic weighted sum of influence and effort scores to measure each sample's importance, which may fail to capture the typically non-linear relationship of these two criteria in complex recommendation tasks \cite{weightsum, weightedsum2}. In contrast, GORACS optimizes a proxy objective that accurately bounds the loss and incorporates non-linear OT distance to effectively model complex relationships.
    
4. To validate the effectiveness of our proposed POO ($\mathbb{S}$), we present scatter plots of Test Loss versus $\mathbb{S}$ for both tasks on the Movies dataset in Figure \ref{fig:overall}. The results show that GORACS achieves the best optimization of $\mathbb{S}$ and, consequently, the lowest Test Loss. As depicted in the figure, DEALRec ranks second in both $\mathbb{S}$ and Test Loss, which is fairly consistent with its performance in Table \ref{combined_results}. The positive linear relationship between Test Loss and $\mathbb{S}$ further justifies the POO ($\mathbb{S}$) as an indicative objective for coreset selection. 

\subsection{In-depth Analysis}

    \begin{figure}[t!]  
        \centering  
        \includegraphics[width=1.05\linewidth]{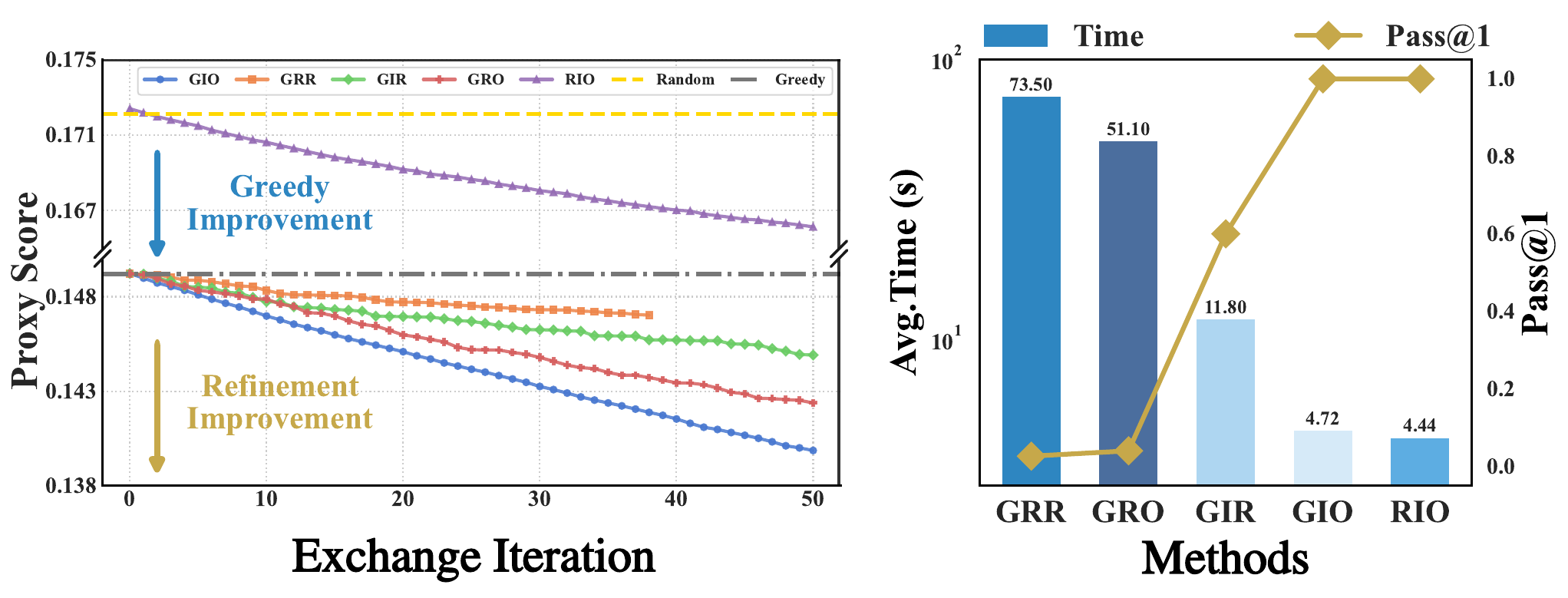} 
        \caption{Problem solving performance comparisons of ITRA variants on Games in terms of detailed optimization progress (left), and exchange time cost \& success ratio (right).} 
        \Description{}
        \label{fig:efficiency}  
    \end{figure} 
\subsubsection{{Ablation study}} \label{exp:ablation}
To assess the contributions of each component of GORACS, we conduct ablation studies by separately removing the OT distance term, the gradient norm term, the greedy search stage\footnote{In this case, we use randomly sampled subsets instead for initialization.} and the refinement stage, referred to as ``w/o OT'', ``w/o grad'', ``w/o S-1'' and ``w/o S-2'', respectively. We also replace Algorithm \ref{alg:2} with Algorithm \ref{alg:1} on CTRPre, termed as ``w/o labels'' to justify the impacts of incorporating label informantion.  
The results on Food are presented in Figure \ref{fig:ablations}, from which we observe that: 1) Removing OT distance or gradient norms degrades performance, while OT distance has a greater impact due to its essential role in measuring distribution discrepancies and capturing inter-sample relationships on group level. 2) Both the greedy search and refinement stage are critical for achieving high-quality solutions that better minimize test loss. 3) Neglecting label information on CTRPre significantly reduces GORACS's performance, highlighting labels' importance in capturing fine-grained class characteristics in discriminative tasks. In summary, GORACS's superior performance derives from its synergistic design that effectively integrates different components to address complex coreset selection task. 

\subsubsection{{Analysis of ITRA}}\label{sec:efficiency}

To assess the effectiveness and efficiency of the proposed ITRA algorithm and its components, i.e., greedy initialization (\textbf{G}), inner pruning (\textbf{I}) and outer pruning (\textbf{O}), we replace each with Random (\textbf{R}) and compare various combinations (e.g., \textbf{RIO},\textbf{GRR}, \textbf{GIR}, \textbf{GRO}, \textbf{GIO}) in terms of optimization process (Figure \ref{fig:efficiency} (left)) and exchange performance\footnote{Specifically, we compute two representative metrics Pass@1 (ratio of accepting the first candidate exchange) and Avg.Time (average time per successful exchange).} (Figure \ref{fig:efficiency} (right)). From the figure we observe that: 1) Greedy initialization provides a strong starting solution (0.149), significantly outperforming Random initialization (0.172), demonstrating its importance in setting a solid foundation. 2) Inner and outer pruning are critical for improving optimization performance and efficiency. The variants without them (e.g., GIR and GRO) perform poorly, and GRR even terminates prematurely due to rejecting all randomly searched candidate exchanges. 
Combining both strategies, GIO (i.e., ITRA) achieves the superior performance in terms of both effectiveness (fastest descent speed) and efficiency (perfect Pass@1 (100\%) and low average time cost). Notably, RIO has a slightly lower average time cost than GIO due to the absence of the greedy initiation stage in RIO, which only costs 19 seconds on the Games dataset with about 140,000 training samples. 4) As shown in Figure \ref{fig:efficiency} (right), inner pruning has the most significant impact on exchange success and time cost, likely due to its essential role in identifying the suboptimal samples mistakenly included in the early stage of greedy initialization. 

\subsubsection{Computational Efficiency}\label{sec:computational_complexity}
To further evaluate the computational efficiency of GORACS, we conduct experiments on the SeqRec task with the Games dataset, comparing GORACS with DEALRec and full-data training. As shown in Table \ref{tab:computational_complexity}, we report recommendation metrics, the time costs for data selection and training, and the total flos\footnote{The total number of floating-point operations for the entire process, including both data selection and training.}. Notably, GORACS achieves superior recommendation performance with only 20\% of the total time consumption and 15\% of the total flos required by full-data training, demonstrating substantial gains in both effectiveness and efficiency. Meanwhile, both DEALRec and GORACS outperform full-data training, highlighting the practical benefits of coreset selection in efficient training of LLMRecs, which is consistent with prior findings \cite{dealrec, condensation}.
\begin{table}[t]  
    \centering  
    \caption{Computational cost comparison on Games. Select.T and Train.T represent time cost for data selection and training (measured in hours). Flos denotes the total floating point operations consumed in the entire process.}  
    
    \begin{tabular}{l|ccccc}  
        \Xhline{1.2pt}   
        \textbf{Methods} & \textbf{N@5}$\boldsymbol{\uparrow}$ & \textbf{H@5}$\boldsymbol{\uparrow}$ & \textbf{Select.T}$\boldsymbol{\downarrow}$&\textbf{Train.T}$\boldsymbol{\downarrow}$ & \textbf{Flos}$\boldsymbol{\downarrow}$\\
        \hline
        \textbf{DEALRec} & 0.1777 & 0.2372 & 1.75 & 1.34 & 1.07e+18 \\
        \textbf{GORACS} & 0.1924 & 0.2586 & 1.63 & 1.29 & 1.01e+18 \\
        \textbf{Full Data} & 0.1702 & 0.2302 & - & 14.2 & 6.73e+18 \\ 
       \Xhline{1.2pt} 
    \end{tabular}  
    \label{tab:computational_complexity}  
\end{table} 

\subsubsection{{Robustness across different embedding models and LLM backbones}}   \label{sec:robustness}

\begin{figure}[t!]  
    \begin{minipage}[c]{0.48\linewidth}  
        \centering  
        \captionsetup{type=table}  
        \setlength{\tabcolsep}{2pt}  
         \caption{\footnotesize{GORACS' Performance SeqRec of Games with different encoder models to compute OT distance.}}  
        \label{tab:encoder}  
        \begin{tabular}{l|ccc}  
            \toprule  
            \textbf{Enc.} & \textbf{TL}$\boldsymbol{\downarrow}$ & \textbf{N@10} & \textbf{HR@10} \\
            \midrule  
            \midrule  
            \textbf{Be.B} & 0.7652 & 0.2131 & 0.3272 \\
            \textbf{Ro.B} & 0.7650 & 0.2195 & 0.3404 \\
            \textbf{Ro.L} & 0.7604 & 0.2199 & 0.3418 \\
            \textbf{BGE} & 0.7545 & 0.2286 & 0.3512 \\
            \bottomrule  
        \end{tabular}  
    \end{minipage}  
    \hfill  
    \begin{minipage}[c]{0.5\linewidth}  
        \centering
        \includegraphics[width=\linewidth]{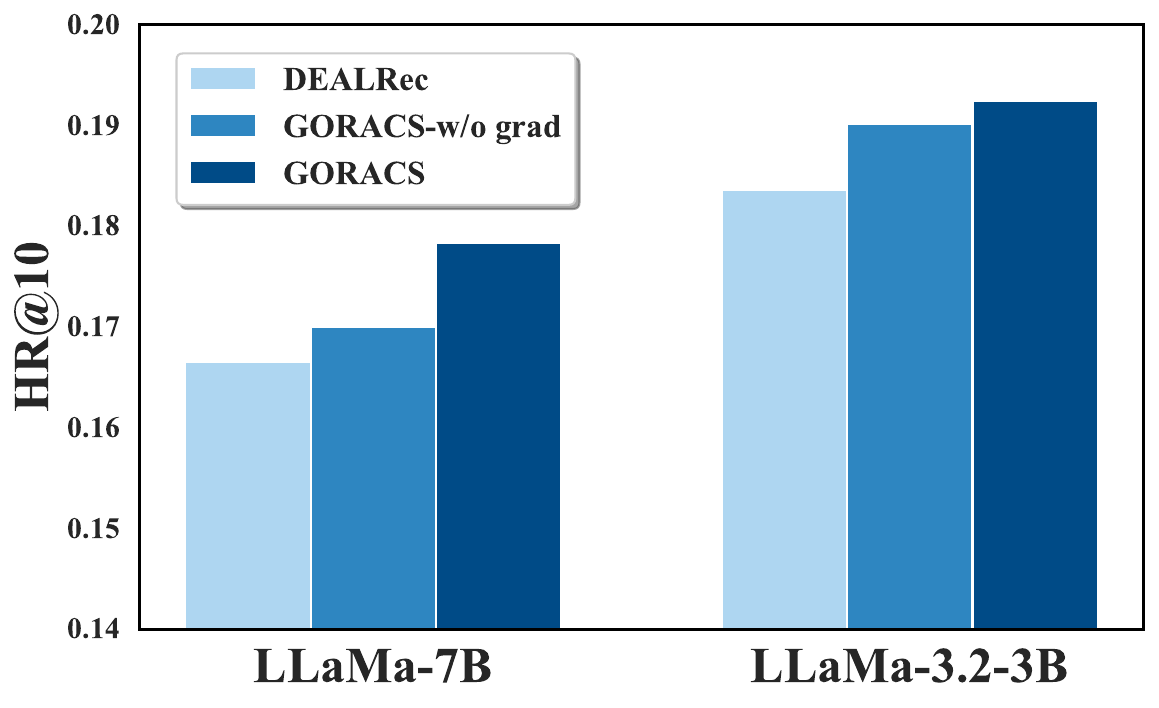}  
        \caption{\footnotesize{HR@10 on Food for DEALRec, GORACS (w/o grad) and GORACS applied to different LLMs.}}  
        \Description{The figure presents the SeqRec performance comparison of DEALRec, GORACS (without gradient information) and GORACS equipped with different LLMs backbone. The figure mainly displays the performance on Food dataset.}
        \label{fig:llm}  
    \end{minipage}  
    \vfill 
\end{figure}  

To evaluate the robustness of GORACS across diverse embedding models and LLM backbones, we employ four representative encoders, i.e., Bert-base (\textbf{Be.B}) \cite{bert}, RoBERTa-base (\textbf{Ro.B}) \cite{liu2019roberta}, RoBERTa-large (\textbf{Ro.L}) \cite{liu2019roberta}, and BGE-large-en-v1.5 (\textbf{BGE}) \cite{bge_embedding}, as the embedding models. As shown in Table \ref{tab:encoder}, stronger encoders consistently enhance GORACS’s performance by capturing recommendation-relevant features more precisely, enabling OT distance to better measure distributional discrepancies. However, the performance differences across different encoders remain small, demonstrating GORACS’s robustness on embedding quality. 
For backbone evaluation, we compare DEALRec, GORACS (w/o grad), and GORACS on SeqRec using LLaMA-7B \cite{llama} and LLaMA-3.2-3B-Instruct\cite{llama3.2}. The results in Figure \ref{fig:llm} indicate that GORACS, even without gradient information, consistently outperforms DEALRec, while incorporating gradient knowledge further improves its performance by leveraging model-specific information.

\subsubsection{{Impacts of coreset selection}}
We explore how GORACS enhances recommendation performance by selecting small, high-quality coresets over full-data training. Inspired by \cite{popularity_1, popularity_2}, we hypothesize that full-data training introduces popularity bias, as LLMs tend to memorize frequent popular items instead of capturing user preferences. To verify this, we fine-tune BIGRec with the selection budget $n$ from 64 to 2,048, plus the full dataset. We report HR@10 and Average Recommendation Popularity (ARP@10) \cite{ARP} to evaluate accuracy and popularity bias respectively. As shown in Figure \ref{fig:varying_k}, GORACS's recommendation performance often improves as $n$ increases, even surpassing the full-data trained model when $n\ge 256$, consistent with Section \ref{sec:computational_complexity}. Notably, popularity bias (ARP@10) first decreases as $n$ increases but rises again with full-data training. This occurs because very small coresets (e.g., $n=64$) are especially sensitive to the inclusion of popular items—just a few can dominate training and raise popularity bias. With larger selection budgets, GORACS can better balance popular and long-tail items, reducing popularity bias. However, in the full dataset, the abundance of popular items leads to memorization-driven overfitting and increases popularity bias again. Overall, the link between lower popularity bias and better recommendation performance confirms that popularity bias amplified by full-data training could harm recommendation quality.

\begin{figure}[t!]  
    \centering  
    \includegraphics[width=\linewidth]{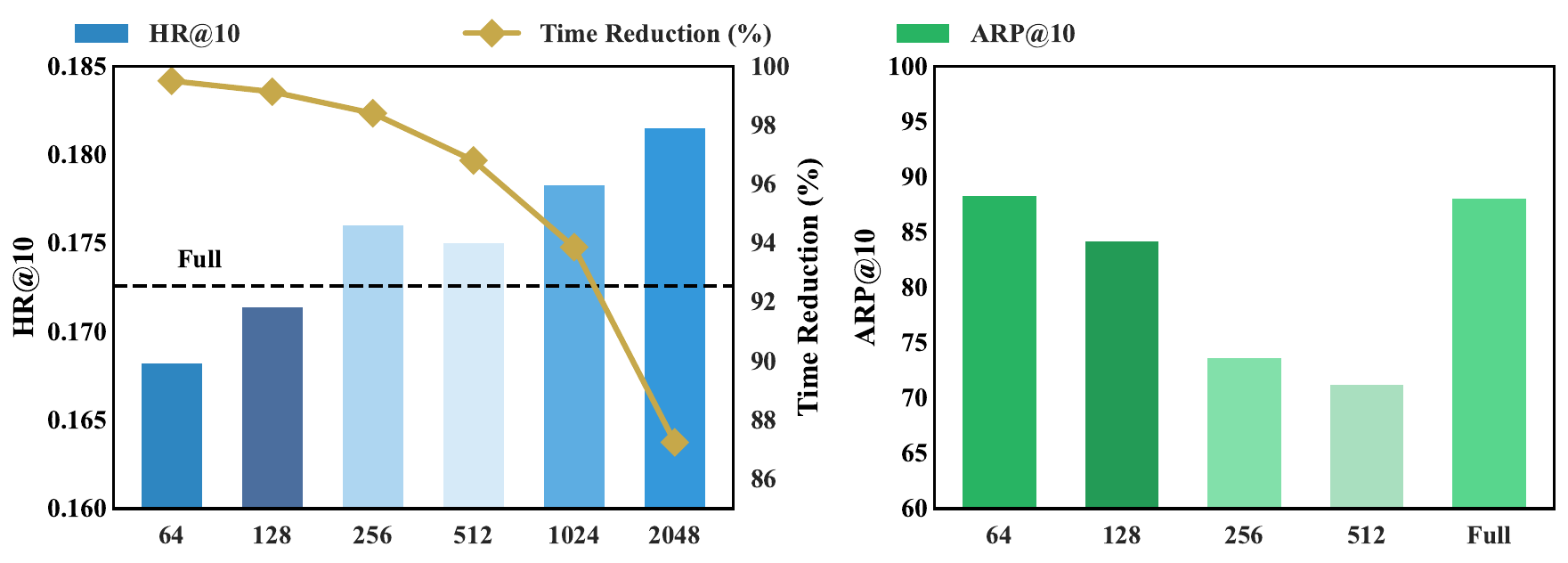}
    \caption{GORACS's performance (HR@10) of varying selection budgets, time reduction rate (compared to full dataset training) and popularity bias (ARP@10). }  
    \Description{This figure displays the performance of GORACS with varying selection budgets in terms of accuracy, time and popularity bias.}
    \label{fig:varying_k}  
\end{figure}

\section{CONCLUSION}
In this paper, we propose GORACS, a novel coreset selection framework for LLM-based recommender systems. GORACS introduces a proxy optimization objective (POO) leveraging optimal transport distance and gradient-based analysis, along with a two-stage algorithm (ITRA) for efficient subset selection. Our extensive experiments on two representative recommendation tasks verify that GORACS achieves SOTA performance and outperforms full dataset training while significantly reducing fine-tuning costs. By aligning coreset selection with downstream task objectives, GORACS provides a scalable and effective solution for applying LLMs to large-scale recommender systems. In the future work, we will explore applying GORACS to more complex recommendation tasks to further validate and extend its potential.

\begin{acks}
This work was supported by the Chinese NSF Major Research Plan (No.92270121).
\end{acks}

\clearpage
\bibliographystyle{ACM-Reference-Format.bst}
\bibliography{references.bib}

\appendix
\vspace{-1em}
\section{Appendix}
\subsection{Datasets and implementation details} \label{app:detail}

We conduct all experiments on four NVIDIA RTX A800 GPUs. For all the baselines and backbones, we use their open-source codes and follow the original settings in their papers. For BIGRec and TALLRec, We employ LLaMa-7B \cite{llama} with LoRA \cite{lora} for parameter-efficient fine-tuning, and set the selection budgets to 1,024 and 64 respectively, consistent with their original experimental settings. For GORACS, we search $\lambda$ in $\{0, 0.05, 0.1, 0.3, 0.5\}$. We apply our frameworks specified in Algorithm \ref{alg:1} and Algorithm \ref{alg:2} to SeqRec and CTRPre, respectively. For DEALRec, we utilize SASRec \cite{sasrec} to compute influence scores and search the regularization strength in $\{0.1, 0.3, 0.5, 0.7, 0.9\}$. We compute GraNd and EL2N using LLMs trained on the entire datasets for one epoch, as recommended in \cite{epoch}. For CCS, D2, and DSIR, we explore the number of strata, nearest neighbors, and hashed buckets in $\{25, 50, 75\}$, $\{5, 10, 20\}$ and $\{1000, 5000, 10000\}$, respectively. To ensure fairness, all embedding-based methods adopt the same RoBERTa-base \cite{liu2019roberta} encoder. All the optimal parameters are selected based on validation performance.
\begin{table}[h]  
    \centering  
    \caption{Statistics of datasets.} 
    \begin{tabular}{lcccc}  
        \toprule  
        \textbf{Datasets}& \textbf{\#Users} & \textbf{\#Items} & \textbf{\#Interactions} & \textbf{\#Sequences} \\
        \midrule
        \textbf{Games} &55,223 & 17,408 & 497,577 & 149,796\\ 
        \textbf{Food} &14,681 & 8,713 & 151,254 & 43,293\\ 
        \textbf{Movies}& 297,529 & 60,175 & 3,410,019 & 114,594\\
        \bottomrule  
    \end{tabular}  
    \label{tab:datasets}  
\end{table} 

\subsection{Scalability of GORACS}

To evaluate the scalability of GORACS, we conduct experiments on SeqRec task with the much larger MovieLens-1M dataset\footnote{\url{https://grouplens.org/datasets/movielens/}}(ML-1M), which contains about 930k sequences. Following Section \ref{sec:exp}, we fix the coreset size to 1,024 and the validation set size and test set size to 5k. We compare GORACS with Random Selection and DEALRec. As shown in Table \ref{appendix_scalability}, GORACS consistently outperforms the baselines across all recommendation metrics, demonstrating its effectiveness when applied to a larger dataset. Regarding efficiency, both DEALRec and GORACS spend significantly more time on coreset selection than on model training, since selection requires computing gradient norms (i.e., effort scores for DEALRec) over the entire training set. However, DEALRec’s selection time is longer due to the extra need to train a surrogate recommendation model. Importantly, GORACS’s coreset selection time scales nearly linearly with the dataset size: selection on ML-1M takes approximately 6.0 (9.8/1.63$\approx$6.0, see Section \ref{sec:computational_complexity}) times longer than on the Games dataset (150k sequences), closely matching their size ratio (930k/150k$\approx$6.6). This confirms the scalability of GORACS.

\begin{table}[h]  
    \centering  
    \caption{Performance comparison for SeqRec task on the larger MovieLens-1M dataset.}
    \label{appendix_scalability}  
    \begin{tabular}{l|cccc}  
        \Xhline{1.2pt}
        \multirow{2}{*}{\textbf{Methods}} & \multicolumn{4}{c}{\textbf{MovieLens-1M}}\\
        \cline{2-5}
        & \textbf{N@5}$\boldsymbol{\uparrow}$ & \textbf{H@5}$\boldsymbol{\uparrow}$ & \textbf{Select.T}$\boldsymbol{\downarrow}$&\textbf{Train.T}$\boldsymbol{\downarrow}$\\
       \Xhline{1.2pt}
        \textbf{Random} &0.1141 &0.1680 &-& 1.09\\
        \textbf{DEALRec} & 0.1178& 0.1720 & 11.7 & 1.18 \\
        \textbf{GORACS} & 0.1227&0.1806 & 9.8 & 1.11 \\
        \Xhline{1.2pt} 
    \end{tabular}  
\end{table}

\subsection{Performance of GORACS with Mistral-7B}

To further demonstrate that our framework generalizes to different LLM architectures beyond the LLaMA series evaluated in Section \ref{sec:robustness}, we conduct experiments using Mistral-7B-v0.3 \cite{mistral}. Specifically, we compare the SeqRec performance of GORACS against other baselines on the Games dataset with Mistral-7B-v0.3 as the backend model. As shown in Table \ref{appendix_more_llm}, GORACS consistently outperforms Random Selection and DEALRec across all recommendation metrics. These results indicate that GORACS effectively improves the quality of the coreset used to fine-tune Mistral, confirming its robustness and generalizability across different LLM architectures.

\begin{table}[h]  
    \centering  
    \caption{Performance comparison for the SeqRec task on the Games dataset using the backend model Mistral-7B-v0.3.}
    \label{appendix_more_llm}  
    \begin{tabular}{l|ccccc}  
        \Xhline{1.2pt}
        \multirow{2}{*}{\textbf{Methods}} & \multicolumn{5}{c}{\textbf{Games}}\\
        \cline{2-6}
        & \textbf{TL}$\boldsymbol{\downarrow}$ & \textbf{N@5}$\boldsymbol{\uparrow}$ & \textbf{N@10}$\boldsymbol{\uparrow}$ & \textbf{H@5}$\boldsymbol{\uparrow}$ & \textbf{H@10}$\boldsymbol{\uparrow}$\\
       \Xhline{1.2pt}
        \textbf{Random} & 0.9097& 0.1662 &0.1948 &0.2228& 0.3120\\
        \textbf{DEALRec} & 0.8916& 0.1719& 0.1990 & 0.2328& 0.3170 \\
        \textbf{GORACS} & 0.8113& 0.1835&0.2129 & 0.2492 & 0.3402 \\
        \Xhline{1.2pt} 
    \end{tabular}  
\end{table}

\subsection{Proofs of Theorems}\label{sec:appendix}
\begin{proof}[Proof of Theorem \ref{lemma:1}]
Given the assumption of $L$-Lipschitz, together with the approximation $\mathbb{E}_{\boldsymbol{z}\sim \mathbb{\mu_{\mathcal{V}}}}[\mathcal{L}_{\phi^*_{\mathcal{S}}}(\boldsymbol{z})]\approx\mathbb{E}_{\boldsymbol{z}'\sim \mathbb{P}}[\mathcal{L}_{\phi^*_{\mathcal{S}}}(\boldsymbol{z}')]$, the theorem follows directly from Eq. \ref{KR}.
\end{proof}
\begin{proof}[Proof of Theorem \ref{theo:grad}]
    We apply a widely-used lemma for analyzing GD with $G$-smooth functions \cite{gradient_descend} to obtain the inequality
    \begin{equation*}
        H_{\mathcal{S}}(\phi^*_{\mathcal{S}})\le H_{\mathcal{S}}(\phi^1)\le H_{\mathcal{S}}(\phi^0) - \eta^0(1-G\eta^0/2)\|\nabla_\phi H_{\mathcal{S}}(\phi^0)\|^2.
    \end{equation*}
    According to $H_{\mathcal{S}}(\phi^0)$'s definition, we note that $H_{\mathcal{S}}(\phi^0)\le \Lambda=\underset{\boldsymbol{z}\in \mathcal{T}}{\max} \,\mathcal{L}_{\phi^0}(\boldsymbol{z})$. Additionally, $\eta^0(1-G\eta^0/2)>0$ since $0<\eta^0<\frac{2}{G}$. Therefore, if we define a constant irrelevant to $\mathcal{S}$ as follows:
    $$\small
    C = \eta^0(1-G\eta^0/2)\cdot \min_{\mathcal{S}\subset \mathcal{T}} \frac{\|\nabla_{\phi} H_{\mathcal{S}}(\phi^0)\|^2}{\frac{1}{|\mathcal{S}|}\sum_{\boldsymbol{z}\in \mathcal{S}} \|\nabla_{\phi}\mathcal{L}_{\phi^0}(\boldsymbol{z})\|}>0,$$
    it allows us to prove Eq. \ref{theo4.2}. 
\end{proof}
\begin{proof}[Proof of Theorem \ref{theo:mi}]  
We prove the theorem by exploiting the dual formulation of $\mathbb{S}(\mathcal{S}) = OT_{\boldsymbol{\mathrm{M}}}(\mu_{\mathcal{S}}, \mu_{\mathcal{V}})$. By definition, we have (we use distribution $\mu$ to directly represent the probability mass vector associated with $\mu$ for simplicity in this proof):
\begin{equation}  
\label{dual_formulation}  
\mathbb{S}(\mathcal{S}) = \max_{\boldsymbol{u} \oplus \boldsymbol{v} \leq \boldsymbol{\mathrm{M}}} \left(\mu_{\mathcal{S}}^T {\boldsymbol{u}} + \mu_{\mathcal{V}}^T \boldsymbol{v}\right) = \mu_{\mathcal{S}}^T \boldsymbol{u}^*(\mathcal{S}) + \mu_{\mathcal{V}}^T \boldsymbol{v}^*(\mathcal{S}), \
\end{equation}  
where $\boldsymbol{u}^*(\mathcal{S}) \in \mathbb{R}^{|\mathcal{T}|}$ and $\boldsymbol{v}^*(\mathcal{S}) \in \mathbb{R}^{|\mathcal{V}|}$ are optimal dual variables satisfying $u_i^*(\mathcal{S}) + v_j^*(\mathcal{S}) \leq M_{ij}$ for all $i, j$. Since $(\mu_{\mathcal{S}})_i$ is nonzero only for $i \in \mathcal{S}$, then $u_i^*(\mathcal{S})$ for $i \notin \mathcal{S}$ can take arbitrary values and does not affect $\mathbb{S}(\mathcal{S})$. This implies that the dual constraint is automatically satisfied for $i \notin \mathcal{S}$ by setting $u_i^*(\mathcal{S})$ to sufficiently small. Consequently, $v_j^*(\mathcal{S}) = \min_{i \in \mathcal{S}} \left(M_{ij} - u_i^*(\mathcal{S})\right).$

Next, we analyze adding a sample $\boldsymbol{z} \notin \mathcal{S}$. Write $ \boldsymbol{p}=\mu_{\mathcal{S}}$ and $ \boldsymbol{q}=\mu_{\mathcal{S} \cup \{\boldsymbol{z}\}}$. Note that $|p_i-q_i|= O(1/|\mathcal{S}|^2)$ for $i\in \mathcal{S}$, and that $|p_{\boldsymbol{z}}-q_{\boldsymbol{z}}|=O(1/|\mathcal{S}|)$. By the Sensitivity Theorem \cite{bertsekas1997nonlinear}, which states that $u_i^*(\mu)$ is continuously differentiable with respect to $\mu$ if $\mu_i>0$, we have $u_i^*( \boldsymbol{p}) \approx u_i^*( \boldsymbol{q})$ for $i \in \mathcal{S}$. Thus $v^*_j( \boldsymbol{q})=\min\left(M_{\boldsymbol{z}j}-u^*_{\boldsymbol{z}}(\boldsymbol{q}), \min_{i\in \mathcal{S}}(M_{ij}-u_i^*(\boldsymbol{q}))\right)\approx\min\left(M_{\boldsymbol{z}j}-u^*_{\boldsymbol{z}}(\boldsymbol{q}), v^*_j(\boldsymbol{p})\right)$. Using \(\min(a, b) = \frac{a + b}{2} - \frac{|a - b|}{2}\), we approximate the change in $\mathbb{S}(\mathcal{S})$:  
\begin{equation}
\small
\nonumber
    \begin{aligned}
        \mathbb{S}(\mathcal{S}\cup\{\boldsymbol{z}\})-\mathbb{S}(\mathcal{S})&=\boldsymbol{q}^T \boldsymbol{u}^*(\boldsymbol{q}) + \mu_{\mathcal{V}}^T \boldsymbol{v}^*(\boldsymbol{q})-\boldsymbol{p}^T\boldsymbol{u}^*(\boldsymbol{p})-\mu^T_{\mathcal{V}}\boldsymbol{v}^*(\boldsymbol{p})\\
        &\approx \frac{1}{|\mathcal{S}|} u^*_{\boldsymbol{z}}(\boldsymbol{q}) + \frac{1}{|\mathcal{V}|} \sum_j (M_{\boldsymbol{z}j}-u^*_{\boldsymbol{z}}(\boldsymbol{q})-v^*_j(\boldsymbol{p}))^-.
    \end{aligned}
\end{equation}
If $u_{\boldsymbol{z}}^*(\boldsymbol{q})$ is replaced by any $t \in \mathbb{R}$, the same analysis yields an inequality (greater than). Therefore, we have 
$$\small\mathbb{S}(\mathcal{S} \cup \{\boldsymbol{z}\})-\mathbb{S}(\mathcal{S}) \approx \sup_t \left\{\frac{1}{|\mathcal{S}|} t + \frac{1}{|\mathcal{V}|} \sum_j (M_{\boldsymbol{z}j} - t - v_j^*(\boldsymbol{p}))^-\right\}.  
$$  
Now consider $\boldsymbol{z} \in \mathcal{S}$. Write $\boldsymbol{r}=\mu_{\mathcal{S} - \{\boldsymbol{z}\}}$, and note by the Sensitivity Theorem that $u_i^*(\boldsymbol{r}) \approx u_i^*(\boldsymbol{p})$ for $i \in \mathcal{S} - \{\boldsymbol{z}\}$. Similarly we have 
{\small\begin{equation}
\nonumber
    \begin{aligned}
        &\mathbb{S}(\mathcal{S})-\mathbb{S}(\mathcal{S}-\{\boldsymbol{z}\})=\boldsymbol{p}^T\boldsymbol{u}^*(\boldsymbol{p})-\boldsymbol{r}^T\boldsymbol{u}^*(\boldsymbol{r})+\mu_{\mathcal{V}}^T\boldsymbol{v}^*(\boldsymbol{p})-\mu_{\mathcal{V}}^T\boldsymbol{v}^*(\boldsymbol{r})\\
        &\approx \frac{1}{|\mathcal{S}|} u^*_{\boldsymbol{z}}(\boldsymbol{p}) + \frac{1}{|\mathcal{V}|} \sum_j (M_{\boldsymbol{z}j}-u^*_{\boldsymbol{z}}(\boldsymbol{r})-v^*_j(\boldsymbol{r}))^-\\
        &\approx \sup_t\left\{\frac{1}{|\mathcal{S}|} t + \frac{1}{|\mathcal{V}|} \sum_j (M_{\boldsymbol{z}j}-t-v^*_j(\boldsymbol{r}))^-\right\}.
    \end{aligned}
\end{equation}}
Thus, we estimate changes in $\mathbb{S}(\mathcal{S})$, completing the proof.
\end{proof}

\subsection{Algorithms} \label{app:alg}

\begin{algorithm}[h!]
\caption{Procedure of GORACS}  \label{alg:1}
\begin{algorithmic}[1]
\State \textbf{Input:} Training set $\mathcal{T}$, validation set $\mathcal{V}$, distance matrix $\boldsymbol{\mathrm{D}}^*\in \mathbb{R}^{|\mathcal{T}|\times |\mathcal{V}|}$, gradient norms $\boldsymbol{g}\in \mathbb{R}^{|\mathcal{T}|}$, parameter $\lambda$, selection budget $n$, exchange candidates $k$, max exchange iterations $T$.
\State $\boldsymbol{\mathrm{M}} = (D^*_{ij}-\lambda g_i)_{ij}$; \Comment{POO Cost Matrix for $OT_{\boldsymbol{\mathrm{M}}}$.(\ref{score_ot})}
\State $\mathcal{S}_{\text{gre}} \gets \varnothing$; \Comment{Stage 1: greedy search.}
\While{$|\mathcal{S}_{\text{gre}}| < n$} 
    \State {add $\quad \text{argmin}_{\boldsymbol{z}\not \in \mathcal{S}_{\text{gre}}} \text{Gain}_{\boldsymbol{\mathrm{M}}}(\boldsymbol{z}|\mathcal{S}_{\text{gre}})\quad$ to $\quad\mathcal{S}_{\text{gre}}$;} \Comment{Eq.(\ref{greedy})}
\EndWhile
\State $\mathcal{S}_1\gets \mathcal{S}_{\text{gre}}$; \Comment{Stage 2: refinement.}
\ForAll{$t \in \{1, 2, \dots, T\}$} 
    \State{%
    \texttt{\textcolor{gray}{// Efficient computation following Sec. \ref{eff_ot_comp}.}}%
} 
    \State $s^t = OT_{\boldsymbol{\mathrm{M}}}(\mu_{\mathcal{S}_t}, \mu_{\mathcal{V}})$;
    \State $ \boldsymbol{u}^*_t\hspace{-0.2em}\gets$ optimal dual variables of $OT_{\boldsymbol{\mathrm{M}}}(\mu_{\mathcal{S}_t}, \mu_{\mathcal{V}})$;
    \State $R=\lceil|\mathcal{V}|/|\mathcal{S}_t|\rceil$;
    \ForAll{$\boldsymbol{z}\in \mathcal{T}$}
        \State $f^{\boldsymbol{\mathrm{M}}}_{\boldsymbol{z}j}(\boldsymbol{u}^*_t)=\min_{\boldsymbol{z}_i\in\mathcal{S}:\boldsymbol{z}_i\not =\boldsymbol{z}} (M_{ij}-u^*_{ti})\quad 1\le j\le |\mathcal{V}|$;
        \State $\hat{y}_{\boldsymbol{z}} \hspace{-0.3em}\gets\hspace{-0.4em} R$-th largest value of $M_{\boldsymbol{z}j}-f^{\boldsymbol{\mathrm{M}}}_{\boldsymbol{z}j}(\boldsymbol{u}^*_t)$ ranked by $j$;
        \State $\text{MI}_{\boldsymbol{\mathrm{M}}}(\boldsymbol{z}|\mathcal{S}_t)=F_{\boldsymbol{\mathrm{M}}}(\hat{y}_{\boldsymbol{z}}|\boldsymbol{z}, \mathcal{S}_t)$; \Comment{Eq.(\ref{AI})}
    \EndFor
    \State $Outer\gets \text{top-}k\text{-min}_{\boldsymbol{z}\not \in \mathcal{S}_t}\text{MI}_{\boldsymbol{\mathrm{M}}}(\boldsymbol{z}|\mathcal{S}_t)$  ;\Comment{Outer pruning.}
    \State $Inner\gets \text{top-}k\text{-max}_{\boldsymbol{z} \in \mathcal{S}_t}\text{MI}_{\boldsymbol{\mathrm{M}}}(\boldsymbol{z}|\mathcal{S}_t)$;  \Comment{Inner pruning.}
    \ForAll{$(\boldsymbol{i}, \boldsymbol{o}) \in Inner\times Outer$}
        \State $\mathcal{S}' = \mathcal{S}_t -\{\boldsymbol{i}\} + \{\boldsymbol{o}\}$;
        \If{$OT_{\boldsymbol{\boldsymbol{\mathrm{M}}}} (\mu_{\mathcal{S'}}, \mu_{\mathcal{V}})<s^t$}\Comment{Verifying decrease.}
            \State $\mathcal{S}_{t+1}\gets \mathcal{S}'$; 
            \State \textbf{break}
        \EndIf
    \EndFor
\EndFor
\State \textbf{Output:} $\mathcal{S}_{T+1}$.  
\end{algorithmic}
\end{algorithm}

\begin{algorithm}[h!]  
\caption{Procedure of Label-enhanced GORACS}  
\begin{algorithmic}[1]  
\State \textbf{Input:} Partitioned training set $\mathcal{T}=\cup_{k=1}^K \mathcal{T}_k$, partitioned validation set $\mathcal{V}=\cup_{k=1}^K\mathcal{V}_k$, selection budget $n$, other parameters $\mathcal{P}$ required for Alg \ref{alg:1}.
\For{$k=1$ to $K$}  
    \State $n_k=\lfloor n\cdot |\mathcal{V}_k|/|\mathcal{V}|\rfloor$; \Comment{Per-class budget} 
     \State $\mathcal{S}_k \gets \text{CoresetSelection}(\mathcal{T}_k, \mathcal{V}_k, n_k, \mathcal{P})$; \Comment{Alg.(\ref{alg:1})}  
\EndFor
\State $\mathcal{S} \gets \bigcup_{k=1}^K \mathcal{S}_k$;  
\State \textbf{Output:} $\mathcal{S}$.  
\end{algorithmic}\label{alg:2}
\end{algorithm}

\end{document}